\newmdenv[skipabove=7pt,
skipbelow=7pt,
backgroundcolor=darkblue!15,
innerleftmargin=5pt,
innerrightmargin=5pt,
innertopmargin=5pt,
leftmargin=0cm,
rightmargin=0cm,
innerbottommargin=5pt,
linewidth=1pt]{tBox}
\newmdenv[skipabove=7pt,
skipbelow=7pt,
backgroundcolor=darkkblue!15,
innerleftmargin=5pt,
innerrightmargin=5pt,
innertopmargin=5pt,
leftmargin=0cm,
rightmargin=0cm,
innerbottommargin=5pt,
linewidth=1pt]{sBox}
\newmdenv[skipabove=7pt,
skipbelow=7pt,
backgroundcolor=blue2!25,
innerleftmargin=5pt,
innerrightmargin=5pt,
innertopmargin=5pt,
leftmargin=0cm,
rightmargin=0cm,
innerbottommargin=5pt,
linewidth=1pt]{dBox}
\definecolor{darkblue}{RGB}{0,76,156}
\definecolor{darkkblue}{RGB}{0,0,153}
\definecolor{blue2}{RGB}{102,178,255}
\theoremstyle{plain}
\newtheorem{lemma}{Lemma}[]
\newtheorem{thm}[lemma]{Theorem}
\newtheorem{stp2}{Step}
\newtheorem{stp}{Step}
\newtheorem{prop}{Proposition}
\newtheorem{cor}{Corollary}
\theoremstyle{definition}
\newtheorem{definition}{Definition}
\theoremstyle{remark}
\newtheorem{remark}{Remark}
\newenvironment{theorem}{\begin{tBox}\begin{thm}}{\end{thm}\end{tBox}}
\newenvironment{step}{\begin{sBox}\begin{stp}}{\end{stp}\end{sBox}}
\newcommand{\identity}{\ensuremath{\mathds{1}}}
\newcommand{\hs}{\ensuremath{\mathcal H}}
\newcommand{\ent}[2]{D(#1||#2)}
\newcommand{\R}{\mathbb{R}}
\newcommand{\N}{\mathbb{N}}
\newcommand{\LLL}{\mathbb{L}}
\newcommand{\HH}{\mathcal{H}}
\newcommand{\BB}{\mathcal{B}}
\newcommand{\A}{\mathcal{A}}
\newcommand{\SSS}{\mathcal{S}}
\newcommand{\ds}{\displaystyle}
\def\ov{\overset}
\def\un{\underset}
\title{Superadditivity of quantum relative entropy for general states}
\author[Capel]{Ángela Capel}
\author[Lucia]{Angelo Lucia}
\author[Pérez-García]{David Pérez-García}
\address[Capel]{Instituto de Ciencias Matemáticas (CSIC-UAM-UC3M-UCM), C/ Nicolás Cabrera 13-15, Campus de Cantoblanco, 28049 Madrid, Spain}
\email{angela.capel@icmat.es}
\address[Lucia]{QMATH, Department of Mathematical Sciences, University of Copenhagen, Universitetsparken 5, 2100 Copenhagen, Denmark and NBIA, Niels Bohr Institute, University of Copenhagen, Blegdamsvej 17, 2100 Copenhagen, Denmark}
\email{angelo@math.ku.dk}
\address[Pérez-García]{Departamento de Análisis Matemático, Universidad Complutense de Madrid, 28040 Madrid, Spain and 
Instituto de Ciencias Matemáticas (CSIC-UAM-UC3M-UCM), C/ Nicolás Cabrera 13-15, Campus de Cantoblanco, 28049 Madrid, Spain}
\email{dperezga@ucm.es}
\date{\today}
\begin{document}
\maketitle

\begin{abstract}
The property of superadditivity of the quantum relative entropy states that, in a bipartite system $\hs_{AB}=\hs_A \otimes \hs_B$, for every density operator $\rho_{AB}$ one has $\ent{\rho_{AB}}{\sigma_A\otimes \sigma_B} \ge \ent{\rho_A}{\sigma_A}+\ent{\rho_B}{\sigma_B}$. In this work, we provide an extension of this inequality for arbitrary density operators $\sigma_{AB}$. More specifically, we prove that  $ \alpha (\sigma_{AB})\cdot \ent{\rho_{AB}}{\sigma_{AB}} \ge \ent{\rho_A}{\sigma_A}+\ent{\rho_B}{\sigma_B}$ holds for all bipartite states $\rho_{AB}$ and $\sigma_{AB}$, where $\alpha(\sigma_{AB})= 1+2 \norm{\sigma_A^{-1/2} \otimes \sigma_B^{-1/2} \, \sigma_{AB} \, \sigma_A^{-1/2} \otimes \sigma_B^{-1/2} - \identity_{AB}}_\infty$.
\end{abstract}

\section{Introduction and notation}\label{sec-1}

The \textit{quantum relative entropy} between two density operators $\rho$ and $\sigma$ in a finite dimensional Hilbert space, $\ent \rho \sigma$, is given by $\tr[\rho (\log \rho - \log \sigma)]$ if $\text{supp}(\rho) \subseteq \text{supp}(\sigma) $ and by $+ \infty$ otherwise\footnote{It can also be defined in infinite dimensions, as well as generalized von Neumann algebras \cite{libropetz}. However, in this work, for simplicity we will restrict to finite dimensions.}. It constitutes a measure of distinguishability between two quantum states and is a fundamental tool in quantum information theory \cite{libropetz}, \cite{wilde}.

The quantum relative entropy is the quantum analogue of the Kullback-Leibler divergence \cite{kld}, the probabilistic relative entropy. Its origin lies in mathematical statistics, where it is used to measure how much two states differ in the sense of statistical distinguishability. The larger the relative entropy of two states is, the more information for discriminating between the hypotheses associated to them can be obtained from an observation.

One of the main properties of quantum relative entropy is \textit{superadditivity}, which states that in a bipartite system $\hs_{AB}=\hs_A \otimes \hs_B$ one has:
\begin{equation}\label{superadditivity}
\ent{\rho_{AB}}{\sigma_A\otimes \sigma_B} \ge \ent{\rho_A}{\sigma_A}+\ent{\rho_B}{\sigma_B}
\end{equation}
for all $\rho_{AB}$, where we use the standard notation $\rho_A=\tr_B[\rho_{AB}]$ and $\tr_B$ is the partial trace.

Since  (Proposition \ref{prop:sigmaprod})
\begin{equation*}
\ent{\rho_{AB}}{\sigma_A\otimes \sigma_B} -\ent{\rho_A}{\sigma_A}-\ent{\rho_B}{\sigma_B}= \ent{\rho_{AB}}{\rho_A\otimes \rho_B},
\end{equation*}
\eqref{superadditivity} is equivalent to the fact that the mutual information $I_\rho(A:B):= \ent{\rho_{AB}}{\rho_A\otimes \rho_B}$ is always non-negative, a fact that appears ubiquitously in quantum information theory. 

In the form \eqref{superadditivity}, superadditivity of the quantum relative entropy has found applications in e.g. quantum thermodynamics \cite{gallego}, statistical physics \cite[Chapter 13]{libropetz} or hypothesis testing \cite{petz}. Indeed, as proven recently in \cite{axcharRE} (building on results from \cite{matsumoto}), the property of superadditivity, along with the properties of \textit{continuity} with respect to the first variable, \textit{monotonicity} and \textit{additivity} (Proposition \ref{prop:REprop}), characterizes axiomatically the quantum relative entropy. 

The main aim of this work is to provide a quantitative extension of (\ref{superadditivity}) for an arbitrary density operator $\sigma_{AB}$. Note that for all $\rho_{AB}$ and $\sigma_{AB}$, as a consequence of monotonicity of the quantum relative entropy for the partial trace, the following holds:
\begin{equation}\label{monotonicity}
2\ent{\rho_{AB}}{\sigma_{AB}}\ge \ent{\rho_A}{\sigma_A} + \ent{\rho_B}{\sigma_B}.
\end{equation}

Therefore we aim to give a constant $\alpha (\sigma_{AB}) \in [1,2]$ at the LHS of (\ref{superadditivity}) that measures how far $\sigma_{AB}$ is from $\sigma_A \otimes \sigma_B$.

Following \cite{clasico} we will consider as $\alpha(\sigma_{AB})-1$ the distance from $\identity$ to ``$\sigma_{AB}$  {\it multiplied by the inverse of $\sigma_A\otimes\sigma_B$}''.
In the case in which $\sigma_{AB}$ and $\sigma_A\otimes\sigma_B$ commute there is a unique way to define this: $\sigma_{AB} \,(\sigma_A^{-1}\otimes \sigma_B^{-1})$. In the non-commutative case, however, there are many possible ways to define the multiplication by the inverse.  The one we will take in the result below is a symmetric analogue of the commutative case, $(\sigma_A^{-1/2}\otimes \sigma_B^{-1/2}) \,  \sigma_{AB} \, (\sigma_A^{-1/2}\otimes \sigma_B^{-1/2})$.
Another one that will appear in the proof of this result is the derivative of the matrix logarithm on $\sigma_A\otimes \sigma_B$ evaluated on $\sigma_{AB}$, $\mathcal{T}_{\sigma_A\otimes \sigma_B}(\sigma_{AB})$, whose explicit equivalent expressions shown in \cite{lieb} and \cite{sutter} will be presented later.

\begin{theorem}
\label{thm:quasifactorizationAB} For any bipartite states $\rho_{AB},\sigma_{AB}$:
\begin{equation*}\label{eq:superadditivity}
(1+2\|H(\sigma_{AB})\|_{\infty})\ent{\rho_{AB}}{\sigma_{AB}}\ge \ent{\rho_A}{\sigma_A} + \ent{\rho_B}{\sigma_B},
\end{equation*}
where

\begin{center}
$H(\sigma_{AB}) =  \sigma_A^{-1/2} \otimes \sigma_B^{-1/2} \, \sigma_{AB} \, \sigma_A^{-1/2} \otimes \sigma_B^{-1/2} - \identity_{AB}$,
\end{center}
and $\identity_{AB} $ denotes the identity operator in $\hs_{AB}$.

Note that $H(\sigma_{AB})=0$ if $\sigma_{AB}=\sigma_A\otimes \sigma_B$.
\end{theorem}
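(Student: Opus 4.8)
The plan is to convert the claimed superadditivity into an equivalent ``quasi-factorization'' estimate for a single relative entropy and then to control the resulting error term through the derivative of the matrix logarithm. Throughout I write $\sigma^0:=\sigma_A\otimes\sigma_B$, $\Gamma:=\sigma_A^{-1/2}\otimes\sigma_B^{-1/2}$ and $H:=H(\sigma_{AB})$, so that $\Gamma\,\sigma_{AB}\,\Gamma=\identity_{AB}+H$ and $\Gamma\,\sigma^0\,\Gamma=\identity_{AB}$, and I set $L:=\log\sigma_{AB}-\log\sigma^0$. First I would record two exact identities. By the definition of relative entropy, $\ent{\rho_{AB}}{\sigma^0}-\ent{\rho_{AB}}{\sigma_{AB}}=\tr[\rho_{AB}L]$; and Proposition~\ref{prop:sigmaprod} gives $\ent{\rho_{AB}}{\sigma^0}=\ent{\rho_A}{\sigma_A}+\ent{\rho_B}{\sigma_B}+\ent{\rho_{AB}}{\rho_A\otimes\rho_B}$. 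Combining the two, the inequality to be proven is seen to be \emph{exactly} equivalent to
\[
\tr[\rho_{AB}\,L]-\ent{\rho_{AB}}{\rho_A\otimes\rho_B}\le 2\norm{H}_\infty\,\ent{\rho_{AB}}{\sigma_{AB}},
\]
that is, to the statement that the two conditional relative entropies $\ent{\rho_{AB}}{\sigma_{AB}}-\ent{\rho_A}{\sigma_A}$ and $\ent{\rho_{AB}}{\sigma_{AB}}-\ent{\rho_B}{\sigma_B}$ (each nonnegative by monotonicity under the partial trace) sum to at least $(1-2\norm{H}_\infty)\ent{\rho_{AB}}{\sigma_{AB}}$. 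This reformulation fixes the equality case $\rho_{AB}=\sigma_{AB}$ and isolates the mutual information $\ent{\rho_{AB}}{\rho_A\otimes\rho_B}\ge0$, i.e.\ \eqref{superadditivity}, as the term that must absorb the first-order part of $\tr[\rho_{AB}L]$.

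The second step is to produce a workable expression for $\tr[\rho_{AB}L]$. Here I would use the integral/divided-difference representations of the logarithm derivative $\mathcal T_{\sigma^0}$ recalled from \cite{lieb,sutter}: along the segment $\sigma_t:=\sigma^0+t(\sigma_{AB}-\sigma^0)=\Gamma^{-1}(\identity_{AB}+tH)\Gamma^{-1}$, the fundamental theorem of calculus yields $\tr[\rho_{AB}L]=\int_0^1\tr[\mathcal T_{\sigma_t}(\rho_{AB})\,(\sigma_{AB}-\sigma^0)]\,dt$, using that each $\mathcal T_{\sigma_t}$ is self-adjoint and positivity preserving for the Hilbert--Schmidt inner product and satisfies $\mathcal T_{\sigma_t}(\sigma_t)=\identity_{AB}$. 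Writing $\sigma_{AB}-\sigma^0=\Gamma^{-1}H\Gamma^{-1}$ and moving the two factors $\Gamma^{-1}=\sigma_A^{1/2}\otimes\sigma_B^{1/2}$ onto the positive operator $\mathcal T_{\sigma_t}(\rho_{AB})$ by cyclicity exposes $H$ inside the trace, which is what brings out the scalar $\norm{H}_\infty$. In parallel I would record the operator sandwich $(1-\norm{H}_\infty)\sigma^0\le\sigma_{AB}\le(1+\norm{H}_\infty)\sigma^0$, obtained by conjugating $(1-\norm{H}_\infty)\identity_{AB}\le\identity_{AB}+H\le(1+\norm{H}_\infty)\identity_{AB}$ by $\Gamma^{-1}$; by operator monotonicity of the logarithm this pins $L$ between $\log(1\mp\norm{H}_\infty)\,\identity_{AB}$ and hence controls its operator norm.

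The crux — and the main obstacle — is that this is genuinely a second-order (curvature) estimate, not a norm estimate. Indeed, bounding the integrand $\tr[\Gamma^{-1}\mathcal T_{\sigma_t}(\rho_{AB})\Gamma^{-1}\,H]$ by $\norm{H}_\infty\,\tr[\Gamma^{-1}\mathcal T_{\sigma_t}(\rho_{AB})\Gamma^{-1}]$ via Hölder only returns the \emph{additive} bound $\ent{\rho_A}{\sigma_A}+\ent{\rho_B}{\sigma_B}\le\ent{\rho_{AB}}{\sigma_{AB}}+\log(1+\norm{H}_\infty)$ already available from the sandwich alone, whereas the theorem demands the strictly stronger \emph{multiplicative} constant $1+2\norm{H}_\infty$. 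That the latter holds reflects an exact cancellation: one checks that $\tr[\rho_{AB}L]$ and the mutual information $\ent{\rho_{AB}}{\rho_A\otimes\rho_B}$ share the same value and the same first variation at $\rho_{AB}=\sigma_{AB}$, so the left-hand side of the displayed inequality vanishes to first order there and must be compared to $\ent{\rho_{AB}}{\sigma_{AB}}$ through their Hessians. The technical heart is therefore to estimate $L$ \emph{without} severing its coupling to $\log\rho_{AB}$ — keeping the relative-entropy structure intact while still extracting the scalar $\norm{H}_\infty$ — so that the first-order contributions are absorbed by the mutual-information term and only a multiple of $\ent{\rho_{AB}}{\sigma_{AB}}$ survives. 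I expect this to be where the bulk of the work lies, and where the explicit Lieb/Sutter forms of $\mathcal T_{\sigma^0}(\sigma_{AB})$ are essential; the factor $2$ should plausibly trace back to the symmetric appearance of $H$ through the two copies of $\Gamma^{-1}$, one feeding each of the $A$- and $B$-marginal contributions.

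Finally, I would dispose of the degenerate regimes. If $\text{supp}(\rho_{AB})\not\subseteq\text{supp}(\sigma_{AB})$ then $\ent{\rho_{AB}}{\sigma_{AB}}=+\infty$ and the left-hand side of the theorem is $+\infty$, so the claim is trivial; I may thus assume the relevant supports are full and $\Gamma$, $\sigma_{AB}^{-1}$ are well defined. Moreover, whenever $\norm{H}_\infty\ge\tfrac12$ the asserted bound is weaker than the elementary inequality \eqref{monotonicity} with constant $2$, so it suffices to establish it in the regime $\norm{H}_\infty<\tfrac12$, precisely the range in which the operator sandwich is non-degenerate and the curvature argument above is available.
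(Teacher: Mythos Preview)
Your reformulation is correct and in fact equivalent to the paper's Step~1: writing $M=\exp[\log\sigma_{AB}-\log\sigma^0+\log\rho_A\otimes\rho_B]$, one has $\tr[\rho_{AB}L]-\ent{\rho_{AB}}{\rho_A\otimes\rho_B}=-\ent{\rho_{AB}}{M}\le\log\tr M$, the last inequality being the Gibbs variational bound $\ent{\rho_{AB}}{M/\tr M}\ge0$. So the target is exactly $\log\tr M\le 2\norm{H}_\infty\,\ent{\rho_{AB}}{\sigma_{AB}}$.

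The gap is at your ``crux'': you correctly diagnose that a naive Hölder bound only yields an additive estimate, but you do not supply the mechanism that produces the second-order behaviour, and your integral-along-$\sigma_t$ representation keeps $\rho_{AB}$ coupled to $H$ in a way that blocks it. The paper's route avoids any curvature/Hessian analysis. First, Lieb's three-matrix extension of Golden--Thompson (not a path integral) gives $\tr M\le\tr[\rho_A\otimes\rho_B\,\mathcal T_{\sigma^0}(\sigma_{AB})]$, so with $L(\sigma_{AB}):=\mathcal T_{\sigma^0}(\sigma_{AB})-\identity_{AB}$ one has $\log\tr M\le\tr[L(\sigma_{AB})\,\rho_A\otimes\rho_B]$. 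The crucial point is that this replaces $\rho_{AB}$ by the \emph{product} $\rho_A\otimes\rho_B$. Second, a direct computation with the Sutter--Berta--Tomamichel integral shows $\tr[L(\sigma_{AB})\,\sigma_A\otimes O_B]=\tr[L(\sigma_{AB})\,O_A\otimes\sigma_B]=0$ for all $O_A,O_B$; this centering converts the bound to $\tr[L(\sigma_{AB})\,(\rho_A-\sigma_A)\otimes(\rho_B-\sigma_B)]$, which is the ``second order in $\rho-\sigma$'' you were looking for, obtained algebraically rather than perturbatively. Third, Hölder plus multiplicativity of $\norm{\cdot}_1$ on tensors and two applications of Pinsker give
\[
\tr[L(\sigma_{AB})\,(\rho_A-\sigma_A)\otimes(\rho_B-\sigma_B)]\le\norm{L(\sigma_{AB})}_\infty\cdot 2\sqrt{\ent{\rho_A}{\sigma_A}\,\ent{\rho_B}{\sigma_B}}\le 2\norm{L(\sigma_{AB})}_\infty\,\ent{\rho_{AB}}{\sigma_{AB}},
\]
the last step by monotonicity under partial trace; this is where the factor $2$ comes from, not from ``two copies of $\Gamma^{-1}$''. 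Finally, $\norm{L(\sigma_{AB})}_\infty\le\norm{H}_\infty$ is shown by an $\LLL^\infty(\sigma^0)$--$\LLL^1(\sigma^0)$ duality argument. Your operator sandwich and the restriction to $\norm{H}_\infty<\tfrac12$ are not needed.
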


This result constitutes an improvement to (\ref{monotonicity}) whenever $ \|H (\sigma_{AB}) \|_{\infty} \leq 1/2 $ (and, hence, $ 1+2\|H (\sigma_{AB}) \|_{\infty} \leq 2 $). Then, it is likely to be relevant for situations where it is natural to assume $\sigma_{AB} \sim \sigma_A \otimes \sigma_B$. This is the case of (quantum) many body systems where such property is expected to hold for spatially separated regions $A, B$ in the Gibbs state above the critical temperature. Indeed, a classical version of Theorem \ref{thm:quasifactorizationAB} proven by Cesi \cite{cesi} and Dai Pra, Paganoni and Posta \cite{clasico}, was the key step to provide the arguably simplest proof of the seminal result of Martinelli and Olivieri \cite{marti-oliv} connecting the decay of correlations in the Gibbs state of a classical spin model with the mixing time of the associated Glauber dynamics, via a bound on the log-Sobolev constant.

\subsection{Notation}

We consider a finite dimensional Hilbert space $\HH$. We denote the set of bounded linear operators acting on $\HH$ by $\BB= \BB(\hs)$ (whose elements we denote by lowercase Latin letters: $f,g$...), and its subset of Hermitian operators by $\A \subseteq \BB$ (whose elements we call \textit{observables}). The set of positive semidefinite Hermitian operators is denoted by $\A^+$. We also denote the set of density operators by $\SSS= \qty{f \in \A^+ \, : \, \tr[f]=1}  $ (whose elements we also call \textit{states} and denote by lowercase Greek letters: $\sigma, \rho$...).

 A linear map $\mathcal{T}: \BB \rightarrow \BB $ is called a \textit{superoperator}. We say that a superoperator $\mathcal{T}$ is \textit{positive} if it maps positive operators to positive operators. Moreover, we denote $\mathcal{T}$ as \textit{completely positive} if $\mathcal{T} \otimes \identity : \mathcal{B} \otimes \mathcal{M}_n \rightarrow \mathcal{B} \otimes \mathcal{M}_n $ is positive for every $n \in \N$, where $ \mathcal{M}_n $ is the space of complex $n \times n$ matrices. We also say that $\mathcal{T}$ is \textit{trace preserving} if $\tr[\mathcal{T}(f)]= \tr[f]$ for all $f \in \BB$. Finally, if $\mathcal{T}$ verifies all these properties, i.e., is a completely positive and trace preserving map, it is called a \textit{quantum channel} (for more information on this topic, see \cite{wolf}).
 
 We denote by $\norm{\cdot}_1$ the trace norm  $\left( \norm{f}_1 = \tr[\sqrt{f^* f}] \right)$  and by $\norm{\cdot}_\infty$ the operator norm $\ds \left( \norm{f}_\infty = \text{sup} \qty{ \norm{f(x)}_{\hs}  :  \norm{x}_{\hs}=1 }\right)$. In the following section, we will make use of this (Hölder) inequality \cite{bhatia}:
\begin{equation}
\norm{f g}_1 \leq \norm{f}_1 \norm{g}_\infty \text{ \phantom{asfs} for every }f, g \in \BB.
\end{equation}

In most of the paper, we consider a bipartite finite dimensional Hilbert space $\HH_{AB}= \HH_A \otimes \HH_B$. When this is the case, we use the previous notation placing the subindex ${AB}$ (resp. $A$, $B$) in each of the previous sets to denote that the operators considered act on $\hs_{AB}$ (resp. $\hs_A$, $\hs_B$). There is a natural inclusion of $\A_{A}$ in $\A_{AB}$ by identifying $\A_{A} = \A_{A} \otimes \identity_{B} $.   


\subsection{Relative entropy}

Let $\hs$ be a finite dimensional Hilbert space, $f,g\in \A^+$, $f$ verifying $\tr[f] \neq 0$.  The \textit{quantum relative entropy} of $f$ and $g$ is defined by \cite{umegaki}:
\begin{equation}
\ent f g = \frac{1}{\tr[ f]}\tr \left[ f (\log f - \log g) \right].
\end{equation}

\begin{remark}
In most of the paper we only consider density matrices (with trace $1$). Let $\rho	, \sigma \in \SSS$. In this case, the quantum relative entropy is given by:
\begin{equation}
\ent {\rho} {\sigma} = \tr \left[ {\rho} (\log \rho - \log \sigma) \right].
\end{equation}
\end{remark}

In the following proposition, we collect some well-known properties of the relative entropy, which will be of use in the following section.

\begin{prop}[Properties of the relative entropy, \cite{wehrl}]\label{prop:REprop} \text{\phantom{s}}\\
Let $\hs_{AB}$ be a bipartite finite dimensional Hilbert space, $\hs_{AB} = \hs_A \otimes \hs_B$. Let $\rho_{AB}, \sigma_{AB} \in \SSS_{AB}$. The following properties hold:
\begin{enumerate}
\item \textbf{Non-negativity.} $\ent {\rho_{AB}} {\sigma_{AB}} \geq 0$ and $\ent {\rho_{AB}} {\sigma_{AB}}=0 \Leftrightarrow {\rho_{AB}}=\sigma_{AB}$.
\item \textbf{Finiteness.} $\ent {\rho_{AB}} {\sigma_{AB}} < \infty $ if, and only if, $\text{supp}(\rho_{AB}) \subseteq \text{supp}(\sigma_{AB})$, where supp stands for support.
\item \textbf{Monotonicity.} $\ent {\rho_{AB}} {\sigma_{AB}} \geq \ent {T(\rho_{AB})} {T(\sigma_{AB})}$ for every quantum channel $T$.
\item \textbf{Additivity.} $\ent {\rho_A \otimes \rho_B} {\sigma_A \otimes \sigma_B}= \ent {\rho_A} {\sigma_A} + \ent {\rho_B} {\sigma_B}$.
\end{enumerate}
\end{prop}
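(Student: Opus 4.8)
The plan is to reduce the claimed inequality to the already-known superadditivity for product references, \eqref{superadditivity}, by controlling the error we make when we replace the reference $\sigma_{AB}$ by the product $\sigma_A\otimes\sigma_B$. Concretely, I would first establish the identity
\begin{equation*}
\ent{\rho_{AB}}{\sigma_A\otimes\sigma_B}-\ent{\rho_{AB}}{\sigma_{AB}}
=\tr\!\left[\rho_{AB}\bigl(\log\sigma_{AB}-\log(\sigma_A\otimes\sigma_B)\bigr)\right],
\end{equation*}
which follows at once from the definition of relative entropy since the $\tr[\rho_{AB}\log\rho_{AB}]$ terms cancel. Combining this with \eqref{superadditivity} applied to the product reference $\sigma_A\otimes\sigma_B$ gives
\begin{equation*}
\ent{\rho_A}{\sigma_A}+\ent{\rho_B}{\sigma_B}
\le \ent{\rho_{AB}}{\sigma_{AB}}+\tr\!\left[\rho_{AB}\bigl(\log\sigma_{AB}-\log(\sigma_A\otimes\sigma_B)\bigr)\right].
\end{equation*}
Thus it remains to bound the ``log-difference'' term by $2\|H(\sigma_{AB})\|_\infty\,\ent{\rho_{AB}}{\sigma_{AB}}$.

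\textbf{Controlling the log-difference.}
The core of the argument is to express $\log\sigma_{AB}-\log(\sigma_A\otimes\sigma_B)$ through the derivative of the matrix logarithm, i.e.\ the operator $\mathcal{T}_{\sigma_A\otimes\sigma_B}$ mentioned in the introduction. Writing $\tau:=\sigma_A\otimes\sigma_B$ and using the integral representation
\begin{equation*}
\log\sigma_{AB}-\log\tau=\int_0^\infty (\tau+s)^{-1}(\sigma_{AB}-\tau)(\sigma_{AB}+s)^{-1}\,ds
\end{equation*}
(or equivalently the Lieb/Sutter expressions for $\mathcal{T}_\tau$), I would factor the difference $\sigma_{AB}-\tau$ symmetrically as
\begin{equation*}
\sigma_{AB}-\sigma_A\otimes\sigma_B
=(\sigma_A^{1/2}\otimes\sigma_B^{1/2})\,H(\sigma_{AB})\,(\sigma_A^{1/2}\otimes\sigma_B^{1/2}),
\end{equation*}
which is exactly the definition of $H(\sigma_{AB})$ rearranged. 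The goal is then to show that the operator $\log\sigma_{AB}-\log\tau$, once paired against $\rho_{AB}$, is controlled in size by $\|H(\sigma_{AB})\|_\infty$ times a quantity that the relative entropy dominates. The natural mechanism is a variational/interpolation bound: the derivative of the logarithm is a contraction in a suitable operator-monotone sense, so that $\|\mathcal{T}_\tau(X)\|$ is controlled by $\|X\|$ measured in the metric induced by $\tau$, and the factors of $\sigma_A^{1/2}\otimes\sigma_B^{1/2}$ combine with $(\tau+s)^{-1}$ to leave precisely $H(\sigma_{AB})$.

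\textbf{From operator bound to entropy bound.}
To convert the operator estimate into the factor $\ent{\rho_{AB}}{\sigma_{AB}}$ on the right-hand side, I expect to use a Pinsker-type or a direct convexity estimate relating $\tr[\rho_{AB}(\log\sigma_{AB}-\log\tau)]$ to the relative entropy. One clean route is to note that the log-difference term, after the symmetric factorization, is bounded in absolute value by $2\|H(\sigma_{AB})\|_\infty$ times something comparable to the ``quantum $\chi^2$'' or the relative entropy itself; alternatively one can bound it by $\|H(\sigma_{AB})\|_\infty\bigl(\ent{\rho_{AB}}{\sigma_{AB}}+\text{(a term absorbed on the left)}\bigr)$ and rearrange. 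The constant $2$ should emerge from the two symmetric square-root factors (each contributing one factor) together with the first-order Taylor control of $\log(\identity+H)$ by $\|H\|_\infty$ on the relevant spectral range.

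\textbf{Main obstacle.}
The delicate point is the last step: passing from the purely operator-norm bound on $\|H(\sigma_{AB})\|_\infty$ to a bound phrased in terms of $\ent{\rho_{AB}}{\sigma_{AB}}$, since $\rho_{AB}$ is arbitrary and need not commute with $\sigma_{AB}$. Handling the non-commutativity of $\sigma_{AB}$ and $\sigma_A\otimes\sigma_B$ in the integral representation of the logarithm, and showing that all the resulting cross terms assemble into exactly $H(\sigma_{AB})$ with the stated constant $2$ (rather than a larger dimension- or spectrum-dependent factor), is where I expect the real work to lie. I would first verify the entire argument in the commuting case, where $\log\sigma_{AB}-\log\tau=\log(\identity+H(\sigma_{AB}))$ and the bound is an elementary scalar inequality, and then lift it to the general case via the derivative-of-logarithm formalism.
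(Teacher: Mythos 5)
Your proposal does not address the statement at hand. The statement is Proposition \ref{prop:REprop}, the list of four standard properties of the relative entropy (non-negativity, finiteness, monotonicity, additivity), which the paper simply quotes from \cite{wehrl} without proof; a correct submission would verify these four facts (e.g.\ non-negativity via Klein's inequality, additivity by a direct computation with $\log(\sigma_A\otimes\sigma_B)=\log\sigma_A\otimes\identity_B+\identity_A\otimes\log\sigma_B$, monotonicity via Lindblad--Uhlmann). What you have written instead is a strategy for the main result, Theorem \ref{thm:quasifactorizationAB}, so none of the four listed properties is established by your argument.

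Moreover, even read as an attempt at Theorem \ref{thm:quasifactorizationAB}, your reduction contains a fatal gap. After invoking superadditivity for the product reference, you are left needing the intermediate bound $\tr\bigl[\rho_{AB}\bigl(\log\sigma_{AB}-\log(\sigma_A\otimes\sigma_B)\bigr)\bigr]\le 2\norm{H(\sigma_{AB})}_\infty\,\ent{\rho_{AB}}{\sigma_{AB}}$, and this inequality is simply false: take $\rho_{AB}=\sigma_{AB}$, so that the left-hand side equals $\ent{\sigma_{AB}}{\sigma_A\otimes\sigma_B}=I_\sigma(A:B)>0$ for any non-product $\sigma_{AB}$, while the right-hand side vanishes. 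The structural reason is that your error term is first order in $\rho_{AB}-\sigma_{AB}$, whereas $\ent{\rho_{AB}}{\sigma_{AB}}$ vanishes to second order at $\rho_{AB}=\sigma_{AB}$, so no constant depending only on $\sigma_{AB}$ can close the gap. The paper's proof avoids exactly this trap: in Step \ref{step:1} the error is kept as $-\log\tr M$ with $M=\exp[\log\sigma_{AB}-\log\sigma_A\otimes\sigma_B+\log\rho_A\otimes\rho_B]$, Step \ref{step:2} bounds $\tr M$ via Lieb's triple-matrix extension of Golden--Thompson and then uses the centering Lemma \ref{lemma:remark} to replace $\rho_A\otimes\rho_B$ by $(\rho_A-\sigma_A)\otimes(\rho_B-\sigma_B)$, producing a genuinely second-order (bilinear) quantity; only then can H\"older, Pinsker and monotonicity (Step \ref{step:3}) convert it into $2\norm{L(\sigma_{AB})}_\infty\ent{\rho_{AB}}{\sigma_{AB}}$, with Step \ref{step:4} translating $L$ into $H$ by an $\LLL^1$--$\LLL^\infty$ duality argument in $\sigma_A\otimes\sigma_B$-weighted norms. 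Your sketch contains no analogue of the recentering step, which is precisely the missing idea.
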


These properties, especially the property of non-negativity, allow to consider the relative entropy as a measure of separation of two states, even though, technically, it is not a distance (with its usual meaning), since it is not symmetric and lacks a triangle inequality.

Let us prove now the property of superadditivity, whenever $\sigma_{AB}= \sigma_A \otimes \sigma_B$.

\begin{prop}\label{prop:sigmaprod}

Let $\hs_{AB}=\hs_A \otimes \hs_B$ and  $\rho_{AB}, \sigma_{AB} \in \SSS_{AB}$. If $\sigma_{AB}= \sigma_A \otimes \sigma_B$, then
\begin{center}
 $\ent {\rho_{AB}} {\sigma_{AB}} = I_\rho(A:B) + \ent {\rho_A} {\sigma_A}+\ent {\rho_B} {\sigma_B}$,
\end{center}
where $I_\rho(A:B)=\ent {\rho_{AB}} {\rho_A \otimes \rho_B}$ is the mutual information \cite{shannon}. 

As a consequence,
\begin{center}
$\ds \ent {\rho_{AB}} {\sigma_{A}\otimes \sigma_B } \geq \ent {\rho_A} {\sigma_A}+\ent {\rho_B} {\sigma_B}$.
\end{center}

\end{prop}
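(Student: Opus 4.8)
The plan is to prove the identity directly from the definition of the relative entropy, the key observation being that the logarithm linearizes over tensor products. First I would write $\ent{\rho_{AB}}{\sigma_A \otimes \sigma_B} = \tr[\rho_{AB}\log\rho_{AB}] - \tr[\rho_{AB}\log(\sigma_A \otimes \sigma_B)]$ and use the elementary fact that $\log(\sigma_A \otimes \sigma_B) = \log\sigma_A \otimes \identity_B + \identity_A \otimes \log\sigma_B$, which follows from applying the functional calculus to the two commuting tensor factors. Combined with the defining property of the partial trace, $\tr[\rho_{AB}(X_A \otimes \identity_B)] = \tr[\rho_A X_A]$, this rewrites the $\sigma$-dependent contribution as $-\tr[\rho_A\log\sigma_A] - \tr[\rho_B\log\sigma_B]$, i.e. in terms of the marginals alone.

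Second, I would expand each of the three terms on the right-hand side in the same manner. The mutual information $I_\rho(A:B) = \ent{\rho_{AB}}{\rho_A \otimes \rho_B}$ becomes $\tr[\rho_{AB}\log\rho_{AB}] - \tr[\rho_A\log\rho_A] - \tr[\rho_B\log\rho_B]$ by the identical tensor-log and partial-trace manipulation applied now to $\rho_A \otimes \rho_B$, while the marginal terms are already in reduced form, $\ent{\rho_A}{\sigma_A} = \tr[\rho_A\log\rho_A] - \tr[\rho_A\log\sigma_A]$ and likewise for $B$.

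Third, summing the three expressions, the von Neumann entropy terms $\tr[\rho_A\log\rho_A]$ and $\tr[\rho_B\log\rho_B]$ each cancel against their counterparts, leaving precisely $\tr[\rho_{AB}\log\rho_{AB}] - \tr[\rho_A\log\sigma_A] - \tr[\rho_B\log\sigma_B]$, which is the reduced form of $\ent{\rho_{AB}}{\sigma_A \otimes \sigma_B}$ obtained in the first step. This establishes the claimed identity. The consequence is then immediate: since $I_\rho(A:B) = \ent{\rho_{AB}}{\rho_A \otimes \rho_B} \geq 0$ by non-negativity of the relative entropy (Proposition \ref{prop:REprop}), discarding this non-negative term yields the stated inequality.

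I do not anticipate a genuine obstacle here, since the whole argument is bookkeeping once the tensor-logarithm identity is in hand. The only point needing a little care is the well-definedness of all quantities, namely that the supports are compatible so the logarithms and traces make sense; this is handled automatically by the support condition built into the definition of the relative entropy, as any incompatible case forces the relevant quantities (and hence both sides) to be $+\infty$.
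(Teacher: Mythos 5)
Your proposal is correct and follows essentially the same route as the paper: the paper's proof inserts and subtracts $\log \rho_A \otimes \rho_B$ inside the trace and invokes additivity, which is just a more compact organization of the same bookkeeping you carry out explicitly via the tensor-logarithm identity and the partial-trace property, and both conclude the inequality from non-negativity of $I_\rho(A:B)$. Your closing remark on supports is a minor refinement the paper leaves implicit, not a difference in method.
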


\begin{proof}
Since $\sigma_{AB}= \sigma_A \otimes \sigma_B$, we have
\begin{eqnarray*}
\ent {\rho_{AB}} {\sigma_{A} \otimes \sigma_{B}} &=& \tr[\rho_{AB}(\log \rho_{AB} - \log \sigma_{A} \otimes \sigma_{B})]\\
&=& \tr[\rho_{AB} (\log \rho_{AB} - \log \rho_{A} \otimes \rho_{B} +\log \rho_{A} \otimes \rho_{B}  - \log \sigma_{A} \otimes \sigma_{B})] \\
&=& \ent {\rho_{AB}} {\rho_A \otimes \rho_B} + \ent {\rho_A \otimes \rho_B} { \sigma_{A} \otimes \sigma_{B}}\\
&=& I_\rho (A:B) + \ent {\rho_A} {\sigma_A} + \ent{\rho_B} {\sigma_B}.
\end{eqnarray*}

Now, since $I_\rho (A:B)$ is a relative entropy, it is greater or equal than zero (property 1 of Proposition \ref{prop:REprop}), so 
\begin{center}
$\ds \ent {\rho_{AB}} {\sigma_{A}\otimes \sigma_B } \geq \ent {\rho_A} {\sigma_A}+\ent {\rho_B} {\sigma_B}$.
\end{center}
\end{proof}
We prove now a lemma for observables (non necessarily of trace $1$) which yields a relation between the relative entropy of two observables and the relative entropy of some dilations of each of them. In particular, it is a useful tool to express the relative entropy of two observables in terms of the relative entropy of their normalizations (i.e., the quotient of each of them by their trace).

\begin{lemma}\label{lemma:rel-entropy-homogeneity}
Let $\hs$ be a finite dimensional Hilbert space and let $f,g\in \A^+$ such that $\tr[f] \neq 0$. For all positive real numbers $a$ and $b$, we have:
\begin{equation}
		\ent {af} {bg} = \ent f g + \log \frac{a}{b}.
\end{equation}
\begin{proof}
\begin{eqnarray*}
\ent{af}{bg} &=& \frac{1}{a \tr f} \left(a \tr \left[ f \left( \log a f - \log bg \right) \right] \right) \\
&=& \frac{1}{\tr f} \qty(\tr[ f \log a] + \tr[ f \log f] - \tr[ f \log b] - \tr[f \log g])  \\
&=&\frac{1}{\tr f} (\tr[ f \left( \log f -  \log g \right) ] ) + \log a - \log b \\
&=& \ent f g + \log \frac{a}{b},
\end{eqnarray*}
where, in the first and third equality, we are using the linearity of the trace, and we are denoting $\log a \identity$ by $\log a$ for every $a\geq 0$.
\end{proof}
\end{lemma}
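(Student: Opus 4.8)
The plan is to evaluate $\ent{af}{bg}$ directly from its definition $\ent{f}{g}=\frac{1}{\tr[f]}\tr[f(\log f-\log g)]$, reducing everything to how the matrix logarithm behaves under multiplication by a positive scalar, together with the linearity of the trace. The only structural input needed is a spectral identity; the rest is bookkeeping with cancellations.

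First I would record two elementary facts. Since $a>0$ is a scalar, $\tr[af]=a\,\tr[f]$, so the normalizing prefactor appearing in $\ent{af}{bg}$ is $\frac{1}{a\,\tr[f]}$. More importantly, for any $f\in\A^+$ and $a>0$ the operators $af$ and $f$ have the same eigenprojections with eigenvalues merely rescaled by $a$; applying $\log$ through the spectral calculus then gives $\log(af)=\log f+(\log a)\,\identity$, and likewise $\log(bg)=\log g+(\log b)\,\identity$. Subtracting, the scalar shifts combine into $\log(af)-\log(bg)=(\log f-\log g)+(\log a-\log b)\,\identity$. Substituting this into the definition and splitting the trace by linearity produces a term $\frac{a}{a\,\tr[f]}\tr[f(\log f-\log g)]$, in which the factor $a$ cancels to give exactly $\ent{f}{g}$, plus a term $\frac{a}{a\,\tr[f]}(\log a-\log b)\tr[f]=\log a-\log b$. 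Adding the two recovers $\ent{af}{bg}=\ent{f}{g}+\log\frac{a}{b}$.

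I expect the only point demanding care to be the spectral identity $\log(af)=\log f+(\log a)\,\identity$: it is immediate once one diagonalizes the positive operator $f$ and argues eigenvalue by eigenvalue, but it is precisely where positivity of $f$ and of the scalars is used, so that $\log$ acts on positive eigenvalues and $af$, $f$ commute, letting the scalar term add cleanly. One should also note that rescaling by positive $a,b$ leaves the supports of $f$ and $g$ unchanged, so the support inclusion governing finiteness (Proposition \ref{prop:REprop}) is preserved and the identity continues to hold, with both sides equal to $+\infty$, in the degenerate case $\text{supp}(f)\not\subseteq\text{supp}(g)$. Everything beyond this is the linearity of the trace and the cancellation of the normalization $\tr[af]=a\,\tr[f]$.
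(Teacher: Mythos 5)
Your proof is correct and follows essentially the same route as the paper's: expand $\ent{af}{bg}$ from the definition, use $\log(af)=\log f+(\log a)\,\identity$ and linearity of the trace, and cancel the normalization $\tr[af]=a\tr[f]$. Your explicit spectral justification of the scalar-shift identity and the remark that rescaling preserves supports (so the identity holds trivially as $+\infty=+\infty$ in the degenerate case) are welcome refinements that the paper leaves implicit, but they do not change the argument.
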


Since the relative entropy of two density matrices is non-negative (property 1 of Proposition \ref{prop:REprop}), we have the following corollary:

\begin{cor}\label{lemma:cond-ent}
Let $\hs$ be a finite dimensional Hilbert space and let $f,g\in \A^+$ such that $\tr[f] \neq 0$ and $\tr[g] \neq 0$. Then, the following inequality holds:
\begin{equation}\label{eq:cond-ent}
\ent f g \ge - \log \frac{\tr[ g]}{\tr[ f]}.
\end{equation}
\end{cor}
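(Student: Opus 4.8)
The plan is to reduce this statement to the non-negativity of the relative entropy between genuine density matrices, by normalizing $f$ and $g$. Since $\tr[f]\neq 0$ and $\tr[g]\neq 0$ and both operators lie in $\A^+$, the normalized operators $\rho := f/\tr[f]$ and $\sigma := g/\tr[g]$ are positive semidefinite with unit trace, hence bona fide density matrices in $\SSS$ to which Proposition \ref{prop:REprop} applies. This is exactly the setting the preceding homogeneity lemma was designed to exploit.

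First I would apply Lemma \ref{lemma:rel-entropy-homogeneity} with the scaling factors $a = 1/\tr[f]$ and $b = 1/\tr[g]$, both strictly positive by the trace hypotheses. This gives the identity
\[
\ent{\rho}{\sigma} = \entmid{\tfrac{f}{\tr[f]}}{\tfrac{g}{\tr[g]}} = \ent{f}{g} + \log\frac{\tr[g]}{\tr[f]},
\]
since $\log(a/b) = \log\big((1/\tr[f])/(1/\tr[g])\big) = \log(\tr[g]/\tr[f])$. Because $\rho$ and $\sigma$ are density matrices, property 1 of Proposition \ref{prop:REprop} yields $\ent{\rho}{\sigma}\geq 0$. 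Substituting this nonnegativity into the displayed identity and rearranging produces exactly $\ent{f}{g}\geq -\log(\tr[g]/\tr[f])$, which is the claimed inequality.

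There is essentially no obstacle here: the corollary is a one-line consequence of combining Lemma \ref{lemma:rel-entropy-homogeneity} with non-negativity. The only points requiring a moment's care are verifying that the homogeneity lemma is applicable with these particular scalars and that the resulting normalized operators genuinely qualify as states, so that non-negativity may legitimately be invoked; both follow immediately from $f,g\in\A^+$ having nonzero trace. Everything else is bookkeeping, and no separate argument for the non-commutative structure of $f$ and $g$ is needed, since the homogeneity identity holds for arbitrary positive operators.
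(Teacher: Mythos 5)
Your proof is correct and follows exactly the paper's own argument: normalize $f$ and $g$ to density matrices, invoke non-negativity of the relative entropy (property 1 of Proposition \ref{prop:REprop}), and apply Lemma \ref{lemma:rel-entropy-homogeneity} with $a=1/\tr[f]$, $b=1/\tr[g]$ to rearrange into the claimed bound. No differences worth noting.
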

\begin{proof}
Since $f/\tr[f]$ and $g / \tr[g]$ are density matrices, we have that
\begin{center}
$\ent {f/\tr[f] \,} {\, g / \tr[g]} \ge 0 $,
\end{center}
and we can apply Lemma \ref{lemma:rel-entropy-homogeneity}:
\begin{center}
$\ds 0 \leq \ent {f/\tr[f] \,} {\, g / \tr[g]} = \ent f g + \log \frac{\tr[g]}{\tr[f]} $.
\end{center}
\vspace{-0.2cm}
\end{proof}

\section{Proof of main result}\label{sec:results}

We divide the proof of Theorem \ref{thm:quasifactorizationAB} in four steps. 

In the first step, we provide a lower bound for the relative entropy of $\rho_{AB}$ and $\sigma_{AB}$ in terms of $\ent {\rho_{A}} {\sigma_{A}}$,  $\ent {\rho_{B}} {\sigma_{B}} $ and an error term, which we will further bound in the following steps.

\begin{step}
\label{step:1}
For density matrices  $\rho_{AB}, \sigma_{AB} \in \SSS_{AB}$, it holds that
\begin{equation}\label{eq:step-1}
\ent{\rho_{AB}}{\sigma_{AB}} \ge \ent {\rho_{A}} {\sigma_{A}} + \ent {\rho_{B}} {\sigma_{B}} - \log \tr M,
\end{equation}
where
$  M = \exp \bqty{ \log \sigma_{AB}  - \log \sigma_{A} \otimes \sigma_{B} +\log \rho_{A} \otimes \rho_{B} } $
and equality holds (both sides being equal to zero) if $\rho_{AB} =
\sigma_{AB}$. \\
Moreover, if $\sigma_{AB} = \sigma_A \otimes \sigma_B$, then $\log \tr M =0$.
\end{step}

\begin{proof}
It holds that:
\begin{eqnarray*}
\ds \ent {\rho_{AB}} {\sigma_{AB}} & - & \left[ \ent {\rho_{A}} {\sigma_{A}} + \ent {\rho_{B}} {\sigma_{B}}  \right]=\\
&=& \ent {\rho_{AB}} {\sigma_{AB}} - \ent {\rho_{A} \otimes \rho_{B}} {\sigma_{A} \otimes \sigma_{B}} \\
&=& \tr \left[ {\rho_{AB}} \left(  \log {\rho_{AB}} - \underbrace{ \left( \, \log {\sigma_{AB}}  - \log \sigma_{A} \otimes \sigma_{B} +\log \rho_{A} \otimes \rho_{B} \, \right) }_{\log M} \right) \right] \\
& =& \ent {\rho_{AB}} M,
\end{eqnarray*}
where $M$ is defined as in the statement of the step and in the first equality we have used the fourth property of Proposition \ref{prop:REprop}.

\vspace{0.2cm}

We can now apply Corollary \ref{lemma:cond-ent} to obtain that
\begin{center}
$\ent {\rho_{AB}} M = \tr[ {\rho_{AB}} (\log {\rho_{AB}} - \log M) ] \ge - \log \tr M$.
\end{center}

It is easy to check, given the definition of $M$, that $M=\sigma_{AB}$ if $\rho_{AB} = \sigma_{AB}$, so both sides are equal to zero in this case.

Also, if $\sigma_{AB}= \sigma_A \otimes \sigma_B$, $M$ is equal to $\rho_A\otimes \rho_B$.
In both cases we have $\log \tr M = 0$.
\end{proof}

Our target now is to bound the error term, $\log \tr M$, in terms of the relative entropy of $\rho_{AB}$ and $\sigma_{AB}$ times a constant which depends only on $A$, $B$ and $\sigma_{AB}$, and represents how far $\sigma_{AB}$ is from being a tensor product. In the second step of the proof, we will bound this term by the trace of the product of a term which contains this `distance' between $\sigma_{AB}$ and $\sigma_{A} \otimes \sigma_{B}$ and another term which depends on $\rho_{AB}$ and not on $\sigma_{AB}$. However, before that, we need to introduce some concepts and results.

First, we recall the Golden-Thompson inequality, proven independently in \cite{golden} and \cite{thompson} (and extended to the infinite dimensional case in \cite{GTinfdim} and \cite{GTinfdim2}), which says that for Hermitian operators $f$ and $g$,
\begin{equation}
\tr[e^{f+g}] \leq \tr[e^f e^g],
\end{equation} 
where we denote by $e^f$ the exponential of $f$, given by
\begin{center}
$\ds e^f:= \ov{\infty}{\un{k=0}{\sum}} \frac{f^k}{k!}  $. 
\end{center}

The trivial generalization of the Golden-Thompson inequality to three operators instead of two in the form $\tr[e^{f+g+h}] \leq \tr[e^f e^g e^h]$ is false, as Lieb mentioned in \cite{lieb}. However, in the same paper, he provides a correct generalization of this inequality for three operators. This result has recently been extended by Sutter et al. in \cite{sutter} via de so-called multivariate trace inequalities (see also the subsequent paper by Wilde \cite{multioperator2}, where similar inequalities are derived following the statements of \cite{multioperator1}).

\begin{thm}
\label{thm:Lieb}
Let $f, g$ be positive semidefinite operators, and recall the definition of $\mathcal{T}_g$:
\begin{equation}
	\mathcal T_g(f) = \int_0^\infty \dd{t} (g+t)^{-1} f (g+t)^{-1} .
\end{equation}
$\mathcal T_g$ is positive semidefinite if $g$ is. 
We have that
\begin{equation}
	\tr[ \exp(-f+g+h)] \le \tr[ e^h \mathcal T_{e^f}(e^g)].
\end{equation}
\end{thm}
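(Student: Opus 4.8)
The plan is to deduce this triple-operator inequality from the concavity of the map $F(A) = \tr[\exp(H + \log A)]$ on positive definite operators $A$, for a fixed Hermitian $H$ — this is Lieb's concavity theorem. Concavity of $F$ yields the supporting-hyperplane (tangent) inequality $F(A) \le F(A_0) + DF_{A_0}(A - A_0)$ for all positive definite $A, A_0$, and the full claim follows by choosing $H$, $A$ and $A_0$ cleverly. The positivity statement I would dispatch first: for $g \ge 0$ and any $f \ge 0$, each integrand $(g+t)^{-1} f (g+t)^{-1}$ is a congruence of $f$, hence positive semidefinite, so $\mathcal{T}_g(f) = \int_0^\infty \dd{t}\,(g+t)^{-1} f (g+t)^{-1} \ge 0$.

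Next I would compute the Fréchet derivative $DF_A$ through the chain rule, using two ingredients. First, the Fréchet derivative of the matrix logarithm at $A$ in direction $Y$ is exactly $\mathcal{T}_A(Y)$ — this is the integral representation already written in the statement, obtained by differentiating $(A+t)^{-1}$ inside the standard formula $\log A = \int_0^\infty [(1+t)^{-1}-(A+t)^{-1}]\,\dd{t}$. Second, the Fréchet derivative of $M \mapsto \tr[e^M]$ at $M$ in direction $Z$ is $\tr[e^M Z]$, by Duhamel's formula together with cyclicity of the trace. Composing these with $M = H + \log A$ gives $DF_A(Y) = \tr[\exp(H + \log A)\,\mathcal{T}_A(Y)]$.

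Now I would specialize, taking $H = h - f$ (Hermitian since $f,h$ are), base point $A_0 = e^f$, and evaluating at $A = e^g$; note $e^f, e^g$ are positive definite, so we stay in the interior of the domain. Then $F(e^g) = \tr[\exp(-f + g + h)]$ is the desired left-hand side, while $F(e^f) = \tr[\exp(h - f + f)] = \tr[e^h]$ and $DF_{e^f}(Y) = \tr[e^h\,\mathcal{T}_{e^f}(Y)]$. The tangent inequality therefore reads
\[
\tr[\exp(-f+g+h)] \le \tr[e^h] + \tr[e^h\,\mathcal{T}_{e^f}(e^g - e^f)].
\]
It remains to observe that $\mathcal{T}_{e^f}(e^f) = \identity$: since $e^f$ commutes with itself, $\mathcal{T}_{e^f}(e^f) = \int_0^\infty (e^f + t)^{-2} e^f\,\dd{t} = \identity$, because $\int_0^\infty a(a+t)^{-2}\,\dd{t} = 1$ for each eigenvalue $a > 0$. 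Hence $\tr[e^h\,\mathcal{T}_{e^f}(e^f)] = \tr[e^h]$ cancels the first term, leaving exactly $\tr[\exp(-f+g+h)] \le \tr[e^h\,\mathcal{T}_{e^f}(e^g)]$.

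The main obstacle is the concavity of $F(A) = \tr[\exp(H + \log A)]$ itself: this is the genuinely deep ingredient, and everything else above is bookkeeping with derivatives and one scalar integral. I would either invoke it as the known Lieb concavity theorem or, to be self-contained, derive it from Lieb's fundamental joint concavity of $(A,B)\mapsto \tr[K^* A^s K B^{1-s}]$ for $0 \le s \le 1$ by a standard limiting argument. A minor secondary point is the rigorous justification of exchanging differentiation with the trace and the defining integral of $\mathcal{T}_A$, which is routine on the interior of the positive definite cone where we are working; the positive semidefinite boundary case of the statement, if needed, follows by replacing $f,g$ with strictly positive perturbations and passing to the limit by continuity.
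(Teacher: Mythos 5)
Your proposal is correct, but note that the paper itself contains no proof of Theorem \ref{thm:Lieb}: it is invoked as a known result, with a citation to Lieb's paper \cite{lieb} and to its recent extension by Sutter, Berta and Tomamichel \cite{sutter}. What you have written is essentially the classical derivation — in fact Lieb's own route: positivity of $\mathcal{T}_g$ follows because each integrand $(g+t)^{-1} f (g+t)^{-1}$ is a congruence of $f\geq 0$, and the triple-operator inequality follows from the concavity of $A \mapsto \tr[\exp(H+\log A)]$ (Lieb's concavity theorem) via the supporting-hyperplane inequality at the base point $A_0 = e^{f}$. The individual steps all check out: the Fr\'echet derivative of $\log$ at $A$ in direction $Y$ is $\mathcal{T}_A(Y)$, the derivative of $M \mapsto \tr[e^{M}]$ in direction $Z$ is $\tr[e^{M} Z]$ by Duhamel plus cyclicity, the chain rule gives $DF_A(Y)=\tr[\exp(H+\log A)\,\mathcal{T}_A(Y)]$, and the computation $\mathcal{T}_{e^{f}}(e^{f}) = \identity$ makes the affine term $\tr[e^{h}]$ cancel, leaving exactly $\tr[\exp(-f+g+h)] \le \tr[e^{h}\,\mathcal{T}_{e^{f}}(e^{g})]$. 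You correctly identify the one genuinely deep ingredient (the concavity theorem) and quarantine it as a citation, which is the same logical status the entire theorem has in the paper, so nothing is lost relative to the paper's treatment. One small simplification: your closing perturbation argument for the ``positive semidefinite boundary case'' is unnecessary for the inequality itself, since $e^{f}$ and $e^{g}$ are positive definite for arbitrary Hermitian $f,g$; a limiting argument would only be relevant for defining $\mathcal{T}_g(f)$ when $g$ is singular, a situation the inequality never requires.
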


This superoperator $\mathcal T_g$ provides a pseudo-inversion of the operator $g$ with respect to the operator where it is evaluated. In particular, if $f$ and $g$ commute, it is exactly the standard inversion, as we can see in the following corollary. 

\begin{cor}
If $f$ and $g$ commute, then
\[ \mathcal T_g(f) = f \int_0^\infty \dd{t} (g + t)^{-2} = f g^{-1}, \]
and therefore
\[ \tr[ \exp(-f+g+h) ]\le \tr[ e^h e^{-f} e^g ] = \tr[ e^h e^{-f+g}] .\]
This shows that Lieb's theorem is really a generalization of Golden-Thompson inequality.
\end{cor}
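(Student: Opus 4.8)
The plan is to verify the two displayed equalities directly and then substitute them into Theorem \ref{thm:Lieb}, so that Lieb's bound collapses to Golden--Thompson. First I would establish the identity $\mathcal{T}_g(f) = fg^{-1}$. Starting from the definition $\mathcal{T}_g(f) = \int_0^\infty \dd{t}\,(g+t)^{-1} f (g+t)^{-1}$, I use that $f$ commutes with $g$ and hence with every resolvent $(g+t)^{-1}$, since the latter is a function of $g$ obtained by functional calculus. This lets me move $f$ past both resolvents and collapse them, giving $f \int_0^\infty \dd{t}\,(g+t)^{-2}$. The remaining integral is purely scalar: diagonalizing $g = \sum_j \lambda_j P_j$ with $\lambda_j > 0$, each spectral projection contributes $\int_0^\infty (\lambda_j + t)^{-2}\,\dd{t} = \lambda_j^{-1}$, so the integral equals $g^{-1}$ and $\mathcal{T}_g(f) = f g^{-1}$.

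Next I would apply Theorem \ref{thm:Lieb} to the triple $f,g,h$, yielding $\tr[\exp(-f+g+h)] \le \tr[e^h \mathcal{T}_{e^f}(e^g)]$. Because $f$ and $g$ commute, so do $e^f$ and $e^g$, so I may invoke the identity just proved, now with $e^f$ in the pseudo-inverted slot and $e^g$ as the argument, to obtain $\mathcal{T}_{e^f}(e^g) = e^g (e^f)^{-1} = e^g e^{-f} = e^{-f} e^g$. Substituting gives $\tr[e^h \mathcal{T}_{e^f}(e^g)] = \tr[e^h e^{-f} e^g]$, and since $-f$ and $g$ commute, $e^{-f} e^g = e^{-f+g}$, so the right-hand side equals $\tr[e^h e^{-f+g}]$.

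Finally, I would observe that the left-hand side rewrites as $\tr[\exp(h + (-f+g))]$, so the chain of (in)equalities reads $\tr[\exp(h + (-f+g))] \le \tr[e^h e^{-f+g}]$, which is precisely the two-operator Golden--Thompson inequality applied to the Hermitian operators $h$ and $-f+g$. This exhibits Golden--Thompson as the commuting special case of Lieb's theorem, completing the claim.

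I do not expect a genuine obstacle here, as the statement is essentially a computation. The only points requiring care are \emph{(i)} justifying the functional-calculus evaluation of the scalar integral $\int_0^\infty (g+t)^{-2}\,\dd{t} = g^{-1}$, and \emph{(ii)} tracking which argument of $\mathcal{T}$ carries the operator being pseudo-inverted, so that the inverse correctly lands on $e^f$ rather than on $e^g$.
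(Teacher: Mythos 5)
Your proposal is correct and follows exactly the route the paper intends (the corollary is stated with its computation essentially inline): commutativity lets $f$ pass through the resolvents $(g+t)^{-1}$, the scalar integral $\int_0^\infty (g+t)^{-2}\,\dd{t}$ evaluates to $g^{-1}$ by spectral calculus, and substituting $\mathcal{T}_{e^f}(e^g) = e^g e^{-f} = e^{-f+g}\,e^{-f}e^{f}\cdot e^{-f}$ (i.e.\ $e^{-f}e^g = e^{-f+g}$) into Theorem \ref{thm:Lieb} recovers Golden--Thompson for $h$ and $-f+g$. Your two flagged points of care, especially tracking that the pseudo-inverse lands on $e^f$ and not $e^g$, are precisely the right ones.
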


We use an alternative definition of this superoperator to obtain a necessary tool for the proof of Step \ref{step:2}. In \cite[Lemma 3.4]{sutter}, Sutter, Berta and Tomamichel prove the following result:

\begin{lemma}
For  ${f} $ a positive semidefinite operator and $g $ a Hermitian operator the following holds:
\begin{center}
$\ds \mathcal{T}_g(f)= \int_{-\infty}^\infty dt \, \beta_0 (t) \, g^{\frac{-1-it}{2}} \,f \, g^{\frac{-1+it}{2}} ,$
\end{center}
with
 \begin{center}
 $\ds \beta_0(t)= \frac{\pi}{2} (\cosh(\pi t)+ 1)^{-1}  $.
 \end{center}
\end{lemma}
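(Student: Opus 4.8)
The plan is to reduce the operator identity to a scalar one by diagonalizing $g$. On the support of $g$, where it is invertible, write the spectral decomposition $g=\sum_i \lambda_i\,|i\rangle\langle i|$ with $\lambda_i>0$. Both the defining integral $\int_0^\infty dt\,(g+t)^{-1}f(g+t)^{-1}$ and the claimed integral are $\C$-linear in $f$, and both act as a scalar kernel on matrix elements in the eigenbasis of $g$: since $(g+t)^{-1}|j\rangle=(\lambda_j+t)^{-1}|j\rangle$ and $g^{\frac{-1+it}{2}}|j\rangle=\lambda_j^{\frac{-1+it}{2}}|j\rangle$ (and analogously from the left), in each case
\[
\langle i|\,\mathcal{T}_g(f)\,|j\rangle = c(\lambda_i,\lambda_j)\,\langle i|f|j\rangle ,
\]
where the kernel $c$ depends only on the pair of eigenvalues. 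Hence it suffices to check that the two kernels coincide for every pair $(\lambda_i,\lambda_j)$.

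From the original definition the kernel is the elementary integral
\[
c_1(\lambda_i,\lambda_j)=\int_0^\infty \frac{dt}{(\lambda_i+t)(\lambda_j+t)} = \frac{\log\lambda_i-\log\lambda_j}{\lambda_i-\lambda_j},
\]
the reciprocal logarithmic mean (equal to $1/\lambda_i$ on the diagonal $\lambda_i=\lambda_j$, by continuity). From the claimed representation, factoring out $\langle i|f|j\rangle$ and collecting the powers of the eigenvalues, the kernel is
\[
c_2(\lambda_i,\lambda_j)=(\lambda_i\lambda_j)^{-1/2}\int_{-\infty}^\infty dt\,\beta_0(t)\,e^{\,it(\log\lambda_j-\log\lambda_i)/2}.
\]

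The heart of the matter is therefore the scalar Fourier identity
\[
\int_{-\infty}^\infty dt\,\beta_0(t)\,e^{\,itu} = \frac{u}{\sinh u},
\]
which I would establish by contour integration: writing $\beta_0(t)=\tfrac{\pi}{4}\operatorname{sech}^2(\pi t/2)$, the integrand has double poles at $t=i(2k+1)$ (where $\cosh(\pi t)+1$ has double zeros), and summing the residues over the half-plane selected by the sign of $u$ yields $u/\sinh u$, the overall normalization being fixed by the value $1$ attained at $u=0$. Granting this identity with $u=(\log\lambda_j-\log\lambda_i)/2$ and setting $\lambda_i=e^a$, $\lambda_j=e^b$, one computes
\[
c_2 = e^{-(a+b)/2}\cdot\frac{(b-a)/2}{\sinh((b-a)/2)} = \frac{b-a}{e^{b}-e^{a}} = c_1,
\]
which closes the argument.

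I expect the Fourier/residue computation to be the only genuine obstacle; the reduction to matrix elements and the evaluation of $c_1$ are routine. Alternatively, one can bypass the residue calculation entirely by invoking the known Fourier transform of $\operatorname{sech}^2$ from standard tables, after which the matching $c_2=c_1$ is immediate.
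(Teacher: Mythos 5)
Your argument is correct, but there is no in-paper proof to compare it against: the paper quotes this lemma verbatim from Sutter, Berta and Tomamichel \cite[Lemma 3.4]{sutter} and offers no proof of its own. What you wrote is essentially the standard proof, and the same reduction used in that reference: both sides are linear in $f$ and act entrywise in an eigenbasis of $g$, so the identity reduces to matching two scalar kernels, namely $c_1(\lambda_i,\lambda_j)=\int_0^\infty \frac{dt}{(\lambda_i+t)(\lambda_j+t)}=\frac{\log\lambda_i-\log\lambda_j}{\lambda_i-\lambda_j}$ against $c_2(\lambda_i,\lambda_j)=(\lambda_i\lambda_j)^{-1/2}\,\widehat{\beta_0}(u)$ with $u=\tfrac{1}{2}\log(\lambda_j/\lambda_i)$, and the crux is the Fourier transform $\widehat{\beta_0}(u)=u/\sinh u$; in \cite{sutter} this same fact appears in the equivalent scalar form $\frac{\log x}{x-1}=\int_{-\infty}^{\infty}dt\,\beta_0(t)\,x^{(-1+it)/2}$. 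Your closing algebra $c_2=c_1$ checks out.

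Two remarks. First, the hypothesis ``$g$ Hermitian'' in the statement as quoted is a slip inherited from the paper: both the defining integral $\int_0^\infty (g+t)^{-1}f(g+t)^{-1}\,dt$ and the complex powers $g^{(-1\pm it)/2}$ require $g$ positive definite (or a restriction to $\mathrm{supp}(g)$, as you do implicitly); for $g$ with negative eigenvalues neither side is defined, so your standing assumption $\lambda_i>0$ is the right reading, not a loss of generality. Second, the Fourier identity is the only genuine content and you leave it sketched; that is acceptable since it is a table integral (the transform of $\operatorname{sech}^2$), and your residue outline does close: with $\beta_0(t)=\tfrac{\pi}{4}\operatorname{sech}^2(\pi t/2)$, the residue at the double pole $t_0=i(2k+1)$ equals $-\tfrac{iu}{\pi}e^{-u(2k+1)}$, and for $u>0$ the sum over $k\ge 0$ gives $2\pi i\sum_{k\ge 0}\bigl(-\tfrac{iu}{\pi}\bigr)e^{-u(2k+1)}=\tfrac{2u}{e^{u}-e^{-u}}=\tfrac{u}{\sinh u}$ exactly, so no separate normalization at $u=0$ is needed (evenness of $\beta_0$ handles $u<0$). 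Note also that positive semidefiniteness of $f$ plays no role: by linearity the identity holds for all $f$.
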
 

Using this expression for $\mathcal{T}_{\sigma_A \otimes \sigma_B} (\sigma_{AB})$, we can prove the following result, which is a quantum version of a result used in \cite{clasico}.

\begin{lemma}\label{lemma:remark}
For every operator $O_{A} \in \BB_A$ and $O_{B} \in \BB_B$ the following holds:
\begin{center}
$\ds \tr[ L(\sigma_{AB}) \, \sigma_{A} \otimes O_{B} ]= \tr[ L(\sigma_{AB}) \, O_{A} \otimes
  \sigma_{B} ]= 0,$
\end{center}
where
\begin{center}
$\ds   L(\sigma_{AB}) = \mathcal{T}_{\sigma_{A} \otimes \sigma_{B}} \left(\sigma_{AB} \right)  - \identity_{AB}$.
\end{center} 

\end{lemma}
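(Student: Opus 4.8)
The plan is to unfold the definition of $\mathcal{T}_{\sigma_A\otimes\sigma_B}(\sigma_{AB})$ using the integral representation from Sutter--Berta--Tomamichel and then exploit the tensor-product structure of $\sigma_A\otimes\sigma_B$ to factorize everything through the partial trace. Concretely, I would prove the first identity $\tr[L(\sigma_{AB})\,\sigma_A\otimes O_B]=0$; the second follows by the symmetric argument $A\leftrightarrow B$. Writing out $L(\sigma_{AB})=\mathcal{T}_{\sigma_A\otimes\sigma_B}(\sigma_{AB})-\identity_{AB}$, the claim splits into two pieces: I must show
\begin{equation*}
\tr[\mathcal{T}_{\sigma_A\otimes\sigma_B}(\sigma_{AB})\,\sigma_A\otimes O_B]=\tr[\sigma_A\otimes O_B]=\tr[\sigma_A]\tr[O_B]=\tr[O_B],
\end{equation*}
using $\tr[\sigma_A]=1$, so that subtracting the contribution of $\identity_{AB}$ gives exactly zero.

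The key computation is the first trace. Plugging in the integral formula and using $(\sigma_A\otimes\sigma_B)^z=\sigma_A^z\otimes\sigma_B^z$ for any complex power, the integrand becomes
\begin{equation*}
\tr\!\bqty{\pqty{\sigma_A^{\frac{-1-it}{2}}\otimes\sigma_B^{\frac{-1-it}{2}}}\sigma_{AB}\pqty{\sigma_A^{\frac{-1+it}{2}}\otimes\sigma_B^{\frac{-1+it}{2}}}\sigma_A\otimes O_B}.
\end{equation*}
Here the plan is to move the $A$-factors around under the trace. Using cyclicity of the trace and the fact that all the $A$-side operators $\sigma_A^{z}$ and $\sigma_A$ act only on the $A$-factor, I can collect them: the operators $\sigma_A^{\frac{-1-it}{2}}$, $\sigma_A^{\frac{-1+it}{2}}$ and $\sigma_A$ all commute with each other (they are powers of the same operator), so after cycling they combine to $\sigma_A^{\frac{-1-it}{2}}\,\sigma_A\,\sigma_A^{\frac{-1+it}{2}}=\sigma_A^{1-it/2-1/2-it/2-1/2+it/2}$—more carefully, $\sigma_A^{\frac{-1-it}{2}}\sigma_A^{\frac{-1+it}{2}}\sigma_A=\sigma_A^{-1}\sigma_A=\identity_A$ after the $it$ exponents cancel. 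This is the crux: the imaginary parts $\pm it/2$ cancel and the real parts $-1/2-1/2+1$ also cancel, leaving the identity on the $A$-system.

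Once the $A$-side collapses to $\identity_A$, the integrand reduces to
\begin{equation*}
\tr\!\bqty{\identity_A\otimes\pqty{\sigma_B^{\frac{-1-it}{2}}(\ldots)\sigma_B^{\frac{-1+it}{2}}O_B}},
\end{equation*}
and here I would take the partial trace over $A$ first. Crucially, taking $\tr_A$ of $\sigma_{AB}$ conjugated only by $A$-operators that have collapsed to the identity returns $\sigma_B$ itself, since $\tr_A[\sigma_{AB}]=\sigma_B$ and the surviving $B$-factors multiply $\sigma_B$ cleanly. The integrand then becomes $\tr[\sigma_B^{\frac{-1-it}{2}}\sigma_B\,\sigma_B^{\frac{-1+it}{2}}O_B]=\tr[\sigma_B O_B]$ by the same exponent cancellation on the $B$-side, independent of $t$. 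Integrating against $\beta_0(t)$ and using $\int_{-\infty}^\infty\beta_0(t)\,dt=1$ yields $\tr[\sigma_B O_B]=\tr[\sigma_A\otimes O_B\cdot\sigma_{AB}']$-type bookkeeping that matches $\tr[\sigma_A\otimes O_B]$ after restoring the $\sigma_A$ factor; combined with the $\identity_{AB}$ term this gives the desired vanishing.

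The main obstacle I anticipate is the careful tracking of the non-commuting $B$-side operators while cycling under the trace: unlike the $A$-side, the $B$-factors $\sigma_B^{\frac{-1\pm it}{2}}$ do not in general commute with $O_B$, so I cannot naively cancel them before performing $\tr_A$. The clean resolution is to perform the partial trace over $A$ \emph{first}—legitimate because the integrand factorizes as $(A\text{-operator})\otimes(B\text{-operator})$ and $\tr_A$ acts only on the first tensor leg—so that the $A$-powers collapse against $\sigma_{AB}$ to produce $\sigma_B$ before one ever needs to commute anything on the $B$-side. After that reduction the remaining $B$-side exponents cancel by cyclicity of the (now $B$-only) trace, and the apparent non-commutativity is harmless. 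I would also need to justify interchanging the integral with the trace, which is immediate in finite dimensions since $\beta_0$ is integrable and the integrand is bounded.
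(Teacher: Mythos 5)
Your strategy is exactly the paper's own: insert the Sutter--Berta--Tomamichel integral representation of $\mathcal{T}_{\sigma_A\otimes\sigma_B}(\sigma_{AB})$, cycle the left conjugator to the right so that the three tensor-product factors multiply leg by leg, collapse the $A$-leg to $\identity_A$, apply the partial-trace identity, and finish with $\int_{-\infty}^{\infty}\beta_0(t)\,\dd{t}=1$. But your final cancellation --- the one step the whole lemma rests on --- is wrong as written. After the partial trace, the integrand is $\tr\bqty{\sigma_B\,\sigma_B^{\frac{-1+it}{2}}\,O_B\,\sigma_B^{\frac{-1-it}{2}}}=\tr\bqty{\sigma_B^{\frac{-1-it}{2}}\,\sigma_B\,\sigma_B^{\frac{-1+it}{2}}\,O_B}$, and the total exponent on $\sigma_B$ is $\frac{-1-it}{2}+1+\frac{-1+it}{2}=0$: the powers collapse to $\identity_B$ and the integrand equals $\tr[O_B]$, \emph{not} $\tr[\sigma_B O_B]$ as you assert. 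The distinction matters: $\tr[\sigma_B O_B]\neq\tr[O_B]=\tr[\sigma_A\otimes O_B]$ in general, so with your value the two terms would not cancel and the lemma would fail; the closing sentence about ``bookkeeping that matches $\tr[\sigma_A\otimes O_B]$ after restoring the $\sigma_A$ factor'' is not a valid step --- there is no $\sigma_A$ factor to restore, since $\tr[\sigma_A\otimes O_B]=\tr[\sigma_A]\tr[O_B]=\tr[O_B]$.

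The repair is immediate, and it is the computation you already did correctly on the $A$-leg (where you found $\frac{-1-it}{2}+\frac{-1+it}{2}+1=0$): the identical full cancellation on the $B$-leg gives $\tr[O_B]$ for every $t$, so integrating against $\beta_0$ yields $\tr[O_B]$, which cancels $\tr[\sigma_A\otimes O_B]$ exactly; with that correction your argument coincides with the paper's proof. Two smaller points to tighten. First, the intermediate expression $\tr[\identity_A\otimes(\sigma_B^{\frac{-1-it}{2}}(\ldots)\sigma_B^{\frac{-1+it}{2}}O_B)]$ is not meaningful, since $\sigma_{AB}$ cannot sit inside the $B$-tensor leg; the correct object is $\tr[\sigma_{AB}\,(\identity_A\otimes\sigma_B^{\frac{-1+it}{2}}O_B\,\sigma_B^{\frac{-1-it}{2}})]$. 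Second, your justification for taking $\tr_A$ first --- that ``the integrand factorizes as $(A\text{-operator})\otimes(B\text{-operator})$'' --- is inaccurate, because $\sigma_{AB}$ is not a product; what you actually need is only the definition of the partial trace, $\tr[\sigma_{AB}(\identity_A\otimes X_B)]=\tr[\tr_A[\sigma_{AB}]\,X_B]$, which holds for arbitrary $\sigma_{AB}$ and is precisely the identity the paper invokes.
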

\begin{proof}
We only prove 
\begin{center}
$\tr[ L(\sigma_{AB}) \, \sigma_{A} \otimes O_{B} ]=0$,
\end{center}
since the other equality is completely analogous.
\begin{align*}
& \tr[ L(\sigma_{AB}) \, \sigma_{A} \otimes O_{B} ]=\\
&\phantom{asdad} = \tr[ \left( \mathcal{T}_{\sigma_{A} \otimes \sigma_{B}} \left(\sigma_{AB} \right)  - \identity_{AB}\right) \sigma_{A} \otimes O_{B}]\\
&\phantom{asdad}=\tr[ \mathcal{T}_{\sigma_{A} \otimes \sigma_{B}} \left(\sigma_{AB} \right)  \sigma_{A} \otimes O_{B}]- \tr[\sigma_{A} \otimes O_{B}] \\
&\phantom{asdad}= \tr[\int_{- \infty}^\infty dt \, \beta_0 (t) \left( \sigma_A \otimes \sigma_B \right)^{\frac{-1-it}{2}} \, \sigma_{AB} \,  \left( \sigma_A \otimes \sigma_B \right)^{\frac{-1+it}{2}} \,  \sigma_{A} \otimes O_{B}] - \tr[O_B]\\
&\phantom{asdad}= \int_{- \infty}^\infty dt \, \beta_0 (t) \tr[ \sigma_A^{\frac{-1-it}{2}} \otimes \sigma_B^{\frac{-1-it}{2}} \, \sigma_{AB} \,  \sigma_A^{\frac{-1+it}{2}} \otimes \sigma_B^{\frac{-1+it}{2}}  \,  \sigma_{A} \otimes O_{B}]-\tr[O_{B}],
\end{align*}
because $\tr[\sigma_A]=1$, the integral commutes with the trace, $\beta_0(t)$ is a scalar for every $t \in \R$ and the exponent in the power of a tensor product can be split into both terms.

Now, since the trace is cyclic and using the fact that any operator in $\hs_B$ commutes with every operator in $\hs_A$, we have:
\begin{align*}
& \tr[ L(\sigma_{AB}) \, \sigma_{A} \otimes O_{B} ]=\\
&\phantom{asdad}= \int_{- \infty}^\infty dt \, \beta_0 (t) \tr[  \sigma_{AB} \,  \sigma_A^{\frac{-1+it}{2}} \otimes \sigma_B^{\frac{-1+it}{2}}  \,  \sigma_{A} \otimes O_{B} \, \sigma_A^{\frac{-1-it}{2}} \otimes \sigma_B^{\frac{-1-it}{2}}]-\tr[O_{B}]\\
&\phantom{asdad}= \int_{- \infty}^\infty dt \, \beta_0 (t) \tr[  \sigma_{AB} \, \left( \sigma_A^{\frac{-1+it}{2}} \sigma_{A} \, \sigma_A^{\frac{-1-it}{2}} \right) \otimes \left( \sigma_B^{\frac{-1+it}{2}}  O_{B}  \,\sigma_B^{\frac{-1-it}{2}} \right)]-\tr[O_{B}]\\
&\phantom{asdad}= \int_{- \infty}^\infty dt \, \beta_0 (t) \tr[  \sigma_{AB} \, \identity_A \otimes  \left( \sigma_B^{\frac{-1+it}{2}}  O_{B}  \,\sigma_B^{\frac{-1-it}{2}} \right)]-\tr[O_{B}]\\
&\phantom{asdad}= \int_{- \infty}^\infty dt \, \beta_0 (t) \tr[  \sigma_{B} \,  \sigma_B^{\frac{-1+it}{2}}  O_{B}  \,\sigma_B^{\frac{-1-it}{2}} ]-\tr[O_{B}]\\
&\phantom{asdad}= \int_{- \infty}^\infty dt \, \beta_0 (t) \tr[ \sigma_B^{\frac{-1-it}{2}} \, \sigma_{B} \,  \sigma_B^{\frac{-1+it}{2}}  O_{B}  ]-\tr[O_{B}]\\
&\phantom{asdad} =  \tr[ O_{B}] \int_{- \infty}^\infty dt \, \beta_0 (t) -\tr[O_{B}]\\
&\phantom{asdad}= 0,
\end{align*}
where we have used 
\begin{center}
$\ds \int_{- \infty}^\infty dt \, \beta_0 (t)=1  $,
\end{center}
and the fact that, for every $f_A \in \BB_A$ and $g_{AB} \in \SSS_{AB}$, the following holds: 
\begin{center}
$\ds \tr[f_A \otimes \identity_B \, g_{AB}] = \tr[f_A \, g_A]  $.
\end{center}
\end{proof}

We are now in position to develop the second step of the proof.

\begin{step}\label{step:2}
With the same notation of \cref{step:1}, we have that
\begin{equation}
\log \tr M \le  \tr[L(\sigma_{AB})  \left( \rho_{A} - \sigma_A \right) \otimes \left(  \rho_{B} - \sigma_B \right) ],
\end{equation}
where
\begin{center}
$\ds   L(\sigma_{AB})= \mathcal{T}_{\sigma_{A} \otimes \sigma_{B}} \left(\sigma_{AB} \right)  - \identity_{AB}$.
\end{center} 
\end{step}

\begin{proof}
We apply Lieb's theorem to the error term of inequality (\ref{eq:step-1}):
\begin{eqnarray*}
\tr M &=&
\tr\left[ \exp(  \underbrace{\log \sigma_{AB} }_{g} -
\underbrace{\log \sigma_{A} \otimes \sigma_{B} }_{f} +
\underbrace{\log \rho_{A} \otimes \rho_{B} }_{h} ) \right] \\
&\leq & \tr[ \rho_{A} \otimes \rho_{B} \mathcal T_{\sigma_{A} \otimes \sigma_{B}} ( \sigma_{AB})]\\
&=&  \tr[  \rho_{A} \otimes \rho_{B} \underbrace{\left( \mathcal T_{\sigma_{A} \otimes \sigma_{B}} (\sigma_{AB}) -\identity_{AB} \right)}_{L(\sigma_{AB})}] + \underbrace{\tr[ \rho_{A} \otimes \rho_{B}]}_1,
\end{eqnarray*}
where we are adding and substracting $\rho_A \otimes \rho_B$ inside the trace in the last equality.

Now, using the fact $\log(x)\le x-1$, we have
\begin{center}
$\log \tr M \leq \tr M - 1 \leq  \tr[ L(\sigma_{AB}) \, \rho_{A} \otimes \rho_{B} ]$.
\end{center}

Finally, in virtue of Lemma \ref{lemma:remark}, it is clear that
\begin{center}
$ \ds \tr[ L(\sigma_{AB}) \, \rho_{A} \otimes \rho_{B} ]= \tr[L(\sigma_{AB})  \left( \rho_{A} - \sigma_A \right) \otimes \left(  \rho_{B} - \sigma_B \right) ] $.
\end{center}

Therefore,
\begin{center}
$ \log \tr M \leq  \tr[L(\sigma_{AB})  \left( \rho_{A} - \sigma_A \right) \otimes \left(  \rho_{B} - \sigma_B \right) ]$.
\end{center}

Notice that if $\sigma_{AB}= \sigma_A \otimes \sigma_B$, then $\mathcal T_{\sigma_{A} \otimes \sigma_{B}} ( \sigma_{AB})= \left( \sigma_{A} \otimes \sigma_{B} \right)^{-1} \sigma_{A} \otimes \sigma_{B}= \identity_{AB} $, so $L(\sigma_{AB})=0$.
\end{proof}

In the third step of the proof, we need to bound $\tr[L(\sigma_{AB})  \left( \rho_{A} - \sigma_A \right) \otimes \left(  \rho_{B} - \sigma_B \right) ]$ in terms of the relative entropy of $\rho_{AB}$ and $\sigma_{AB}$ times a constant depending only on $L(\sigma_{AB})$ (since $L(\sigma_{AB})$ represents how entangled $\sigma_{AB}$ is between the regions $A$ and $B$).
The first well-known result we will use in this step is Pinsker's inequality \cite{csiszar, pinsker}.
\begin{thm}\label{thm:pinsker}
For $\rho_{AB}$ and $\sigma_{AB}$ density matrices, it holds that
\begin{equation}
\norm{\rho_{AB}-\sigma_{AB}}_1^2 \le 2 \ent{\rho_{AB}}{\sigma_{AB}}.
\end{equation}
\end{thm}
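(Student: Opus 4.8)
The plan is to deduce the quantum inequality from the elementary classical (binary) Pinsker inequality, using the monotonicity of the relative entropy under quantum channels (property 3 of Proposition \ref{prop:REprop}) to pass from operators to probabilities. The only design choice is which channel to use, and it should be a two-outcome measurement tailored so that no trace distance is lost in the reduction.

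Concretely, I would start from the Jordan--Hahn decomposition of the traceless Hermitian operator $\rho_{AB}-\sigma_{AB}=P-N$, with $P,N\in\A_{AB}^+$ supported on orthogonal subspaces, and let $\Pi$ be the orthogonal projector onto $\operatorname{supp}(P)$. Consider the measure-and-prepare channel
\[
T(X)=\tr[\Pi X]\,\ketbra{0}{0}+\tr[(\identity_{AB}-\Pi)X]\,\ketbra{1}{1},
\]
which is completely positive and trace preserving. Applied to $\rho_{AB}$ and $\sigma_{AB}$ it produces two diagonal states encoding the Bernoulli distributions $p=(p_0,1-p_0)$ and $q=(q_0,1-q_0)$, where $p_0=\tr[\Pi\rho_{AB}]$ and $q_0=\tr[\Pi\sigma_{AB}]$.

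Monotonicity then gives $\ent{\rho_{AB}}{\sigma_{AB}}\ge\ent{T(\rho_{AB})}{T(\sigma_{AB})}=\ent{p}{q}$, the right-hand side being the classical relative entropy of the two diagonal states. The point of choosing $\Pi$ via the Jordan--Hahn decomposition is that this measurement attains the full trace distance: since $\Pi P=P$ and $\Pi N=0$, one has $p_0-q_0=\tr[\Pi(P-N)]=\tr P=\tfrac12\norm{\rho_{AB}-\sigma_{AB}}_1$, and as both distributions are normalized, $\norm{p-q}_1=2\abs{p_0-q_0}=\norm{\rho_{AB}-\sigma_{AB}}_1$. It then remains to invoke the binary classical Pinsker inequality $\ent{p}{q}\ge\tfrac12\norm{p-q}_1^2$; chaining the three relations yields the claim.

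The classical binary bound is the only genuinely computational ingredient, and it is handled by a one-variable convexity argument: fixing $q_0$ and setting $g(x)=x\log\tfrac{x}{q_0}+(1-x)\log\tfrac{1-x}{1-q_0}-2(x-q_0)^2$, one checks $g(q_0)=g'(q_0)=0$ and $g''(x)=\tfrac{1}{x(1-x)}-4\ge 0$, so $g$ is convex with global minimum $0$ at $x=q_0$, i.e.\ $g\ge 0$. I expect the conceptual heart of the proof, rather than any calculation, to be the main obstacle: one needs a single channel that simultaneously collapses the non-commutative problem to a commutative one (so that monotonicity applies and classical Pinsker is available) and preserves the trace distance exactly; the positive/negative eigenspace projector of $\rho_{AB}-\sigma_{AB}$ is precisely what reconciles these two demands, after which the estimate is immediate.
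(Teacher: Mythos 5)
Your proof is correct, but note that the paper itself does not prove this statement at all: Pinsker's inequality is imported as a known result, cited to Csisz\'ar and Pinsker, and used as a black box in Step 3. What you have written is the standard self-contained derivation of the quantum version from the classical one, and every step checks out: the Jordan--Hahn projector $\Pi$ onto $\operatorname{supp}(P)$ does satisfy $\Pi P = P$ and $\Pi N = 0$, so the two-outcome measurement channel preserves the trace distance exactly ($p_0 - q_0 = \tr P = \tfrac12\norm{\rho_{AB}-\sigma_{AB}}_1$, using $\tr P = \tr N$ since $\rho_{AB}-\sigma_{AB}$ is traceless); monotonicity under the CPTP map $T$ reduces the problem to the binary classical case; and your convexity argument for $g$ is the correct one-variable proof of binary Pinsker, with $g''(x) = \tfrac{1}{x(1-x)} - 4 \ge 0$ because $x(1-x)\le \tfrac14$ (the constant $2$ matches the natural logarithm convention, consistent with the paper). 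Two trivial edge cases are worth a sentence in a fully rigorous write-up: if $\operatorname{supp}(\rho_{AB}) \not\subseteq \operatorname{supp}(\sigma_{AB})$ the right-hand side is $+\infty$ and there is nothing to prove, and similarly if $q_0 \in \{0,1\}$ while $p_0 \neq q_0$ the classical divergence is infinite; otherwise $g$ is only being evaluated where it is finite and smooth. In terms of what each approach buys: the paper's citation keeps the focus on the new quasi-factorization result, while your argument makes the whole chain of Step 3 self-contained and exhibits explicitly that the only genuinely analytic input behind Pinsker is a scalar convexity estimate --- a reasonable trade-off either way, and no gap in what you propose.
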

This result will be of use at the end of the proof to finally obtain the relative entropy in the right-hand side of the desired inequality. However, it is important to notice the different scales of the $\LLL^1$-norm of the difference between $\rho_{AB}$ and $\sigma_{AB}$  and the relative entropy of $\rho_{AB}$ and $\sigma_{AB}$ in Pinsker's inequality. Since we are interested in obtaining the relative entropy with exponent one, we will need to increase the degree of the term with the trace we already have and from which we will construct an $\LLL^1$-norm (since, for the moment, its degree is one). We will see later that the fact that in $ \tr[L (\sigma_{AB}) \left( \rho_{A} - \sigma_A \right) \otimes \left(  \rho_{B} - \sigma_B \right) ]$ we have $ \left( \rho_{A} - \sigma_A \right) \otimes \left(  \rho_{B} - \sigma_B \right) $ split into two regions, the multiplicativity of the trace with respect to tensor products and the monotonicity of the relative entropy play a decisive role 
 in the proof.

Another important fact that we notice in the left-hand side of Pinsker's inequality is that there is a difference between two states (in fact, the ones appearing in the relative entropy). This justifies the use of Lemma \ref{lemma:remark} at the end of Step \ref{step:2}, to obtain something similar to the difference between $\rho_{AB}$ and $\sigma_{AB}$.

We are now ready to prove the third step in the proof of Theorem \ref{thm:quasifactorizationAB}.
\begin{step}\label{step:3}
With the notation of Theorem \ref{thm:quasifactorizationAB},
\begin{equation}
\tr[L (\sigma_{AB}) \left( \rho_{A} - \sigma_A \right) \otimes \left(  \rho_{B} - \sigma_B \right) ]\le 2 \norm{L(\sigma_{AB})}_\infty \ent {\rho_{AB}} {\sigma_{AB}}.
\end{equation}
\end{step}
\begin{proof}

We use the multiplicativity with respect to tensor products of the trace norm and Hölder's inequality between the trace norm and the operator norm. Thus,
\begin{eqnarray*}
\tr[ L(\sigma_{AB}) \, (\rho_{A} - \sigma_{A})\otimes(\rho_{B}-\sigma_{B}) ]&\le & \norm{L(\sigma_{AB})}_\infty \norm{ \left( \rho_{A} - \sigma_{A} \right) \otimes \left( \rho_{B}-\sigma_{B} \right) }_1\\ 
 &=& \norm{L(\sigma_{AB})}_\infty \norm{\rho_{A} - \sigma_{A}}_1 \norm{\rho_{B}-\sigma_{B}}_1.
\end{eqnarray*} 

Finally, Pinsker's inequality (Theorem \ref{thm:pinsker}) implies that
\begin{center}
$\ds \norm{\rho_{A} - \sigma_{A}}_1 \leq \sqrt{ 2 \ent{\rho_{A}}{\sigma_{A}} }, \phantom{sadda} \norm{\rho_{B} - \sigma_{B}}_1 \leq \sqrt{ 2 \ent{\rho_{B}}{\sigma_{B}} } $.
\end{center}

Therefore,
\begin{center}
$\ds  \norm{\rho_{A} - \sigma_{A}}_1 \norm{\rho_{B}-\sigma_{B}}_1\le
 2 \sqrt{\ent{\rho_{A}}{\sigma_{A}} \ent{\rho_{B}}{\sigma_{B}}} \le 2 \ent {\rho_{AB}} {\sigma_{AB}},$
\end{center}
where in the last inequality we have used monotonicity of the relative entropy with respect to the partial trace (Proposition \ref{prop:REprop}). 
\end{proof}

If we now put together Steps \ref{step:1}, \ref{step:2} and \ref{step:3}, we obtain the following expression 
\begin{equation}
(1 + 2 \norm{L(\sigma_{AB})}_\infty) \ent{\rho_{AB}}{\sigma_{AB}}\ge \ent{\rho_A}{\sigma_A} + \ent{\rho_B}{\sigma_B},
\end{equation}
with
\begin{center}
$\ds L(\sigma_{AB})  = \mathcal{T}_{\sigma_{A} \otimes \sigma_{B}} \left(\sigma_{AB} \right)  - \identity_{AB}$.
\end{center}

This inequality already constitutes a quantitative extension of (\ref{superadditivity}) for arbitrary density operators $\sigma_{AB}$ in the sense that if $\sigma_{AB}$ is a tensor product between $A$ and $B$, we recover the usual superadditivity, and in general $\norm{L(\sigma_{AB})}_\infty$ measures how far $\sigma_{AB}$ is from $\sigma_A \otimes \sigma_B$. 

In the fourth and final step of the proof, we bound $\norm{L(\sigma_{AB})}_\infty$  by 
\begin{center}
$\norm{ \sigma_A^{-1/2} \otimes \sigma_B^{-1/2} \, \sigma_{AB} \, \sigma_A^{-1/2} \otimes \sigma_B^{-1/2} - \identity_{AB}}_\infty$,
\end{center}
a quantity from which the closeness to $0$ whenever $\sigma_{AB}$ is near from being a tensor product is directly deduced. It also has some physical interpretation in quantum many body systems that will be discussed after proving Step \ref{step:4}.

 First, we need to introduce the setting of non-commutative $\LLL^p$ spaces with a $\rho$-weighted norm \cite{kosaki}. The central property of these non-commutative ${\LLL}^p$ spaces is that they are equipped with a \textit{weighted norm} which, for a full rank state $\rho\in \SSS_{AB}$, is given by 
\begin{center}
$\ds \norm{f}_{\LLL^p(\rho)} := \tr[\abs{\rho^{1/2p} f \rho^{1/2p}}^p]^{1/p}  \, $ for every $\,f \in \A_{AB}$. 
\end{center} 

Analogously, the $\rho$-\textit{weighted inner product} is given by
\begin{center}
$\ds \left\langle f, g \right\rangle_\rho := \tr[\sqrt{\rho} f \sqrt{\rho} g]  \, $ for every $\,f,g \in \A_{AB}$. 
\end{center}

Some fundamental properties of these spaces are collected in the following proposition.

\begin{prop}\label{prop:noncomLp} 

Let $\rho \in \SSS_{AB}$. The following properties hold for $\rho$-weighted norms:
\begin{enumerate}
\item \textbf{Order.} $\forall p,q \in [1, \infty)$, with $p \leq q$, we have $\norm{f}_{\LLL^p(\rho)} \leq \norm{f}_{\LLL^q(\rho)} \, \forall f \in \A_{AB}$. 

\item \textbf{Duality.} $\forall f \in \A_{AB}$, we have $\norm{f}_{\LLL^p(\rho)} = \sup \qty{ \left\langle g, f \right\rangle_\rho, g \in \A_{AB}, \norm{g}_{\LLL^q(\rho)} \leq 1 }$ for $1/p+1/q=1$.

\item \textbf{Operator norm.} $\forall f \in \A_{AB}$, we have $\norm{f}_{\LLL^\infty(\rho)} = \norm{f}_\infty$, the usual operator norm.
\end{enumerate}
\end{prop}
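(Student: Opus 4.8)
The plan is to establish the three properties separately, reducing each to a standard fact about Schatten $p$-norms. Throughout I abbreviate $\Gamma_s(f):=\rho^{s} f \rho^{s}$, so that $\norm{f}_{\LLL^p(\rho)}=\norm{\Gamma_{1/2p}(f)}_p$, where $\norm{h}_p:=\tr[\abs{h}^p]^{1/p}$ is the Schatten norm; since $\rho$ is full rank, $\Gamma_s$ is a linear bijection of $\A_{AB}$ carrying Hermitian operators to Hermitian operators, for every real $s$. Property (3) is then immediate: with the convention $1/2p\to 0$ as $p\to\infty$ the weight becomes $\rho^{0}=\identity_{AB}$, and the Schatten $\infty$-norm is exactly the operator norm, so $\norm{f}_{\LLL^\infty(\rho)}=\norm{\identity_{AB} f \identity_{AB}}_\infty=\norm{f}_\infty$ (equivalently, $\rho^{1/2p}\to\identity_{AB}$ and $\norm{\cdot}_p\to\norm{\cdot}_\infty$ as $p\to\infty$).

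For the order property (1) with $p\le q$ finite, I would set $F_q:=\Gamma_{1/2q}(f)$ and use the identity $\Gamma_{1/2p}(f)=\rho^{a} F_q\, \rho^{a}$, where $a:=\tfrac1{2p}-\tfrac1{2q}\ge 0$. The generalized Hölder inequality for Schatten norms gives $\norm{\rho^{a} F_q \rho^{a}}_p\le \norm{\rho^{a}}_{2r}\,\norm{F_q}_q\,\norm{\rho^{a}}_{2r}$ provided $\tfrac1p=\tfrac1{2r}+\tfrac1q+\tfrac1{2r}$, i.e. $r=\tfrac{pq}{q-p}$. The key simplification is that $2ar=1$, so that $\norm{\rho^{a}}_{2r}=\tr[\rho^{2ar}]^{1/2r}=\tr[\rho]^{1/2r}=1$, using $\tr[\rho]=1$. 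Combining, $\norm{f}_{\LLL^p(\rho)}=\norm{\rho^{a}F_q\rho^{a}}_p\le\norm{F_q}_q=\norm{f}_{\LLL^q(\rho)}$.

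For the duality property (2), the main step is to turn the $\rho$-weighted pairing into an unweighted Schatten pairing. Writing $F:=\Gamma_{1/2p}(f)$ and $G:=\Gamma_{1/2q}(g)$ and substituting $f=\Gamma_{-1/2p}(F)$, $g=\Gamma_{-1/2q}(G)$ into $\langle g,f\rangle_\rho=\tr[\rho^{1/2} f \rho^{1/2} g]$, the exponents of $\rho$ reorganize using $\tfrac1{2p}+\tfrac1{2q}=\tfrac12$: the central power becomes $\rho^{0}$ and the two remaining powers $\rho^{1/2q}$ and $\rho^{-1/2q}$ cancel after a cyclic rotation, leaving $\langle g,f\rangle_\rho=\tr[FG]$. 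Since $g\mapsto G$ is a bijection of the Hermitian operators that preserves the constraint $\norm{g}_{\LLL^q(\rho)}=\norm{G}_q\le 1$, the supremum equals $\sup\{\tr[FG]:G=G^{*},\ \norm{G}_q\le 1\}$, which by Schatten duality equals $\norm{F}_p=\norm{f}_{\LLL^p(\rho)}$.

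The main obstacle is not a deep inequality but two verifications that must be made carefully. First, in (2) one must check that restricting the supremum to \emph{Hermitian} $G$ does not decrease it below $\norm{F}_p$; this holds because for Hermitian $F$ the Hölder-saturating dual element $G\propto \operatorname{sgn}(F)\,\abs{F}^{p-1}$ is itself Hermitian and attains $\tr[FG]=\norm{F}_p$ with $\norm{G}_q=1$, so the Hermitian supremum already equals the full Schatten dual norm. Second, every exponent identity above ($2ar=1$ in (1), and the cancellations $\tfrac12-\tfrac1{2p}=\tfrac1{2q}$ and $-\tfrac1{2p}+\tfrac12-\tfrac1{2q}=0$ in (2)) hinges on $\tfrac1p+\tfrac1q=1$ together with $\tr[\rho]=1$; this is precisely where the normalization of $\rho$ enters, making (1) the non-commutative analogue of the classical nesting $\norm{f}_{\LLL^p(\mu)}\le\norm{f}_{\LLL^q(\mu)}$ over a probability space.
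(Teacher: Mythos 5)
Your proof is correct, but note that it cannot be compared against a proof in the paper, because the paper gives none: Proposition \ref{prop:noncomLp} is stated as a collection of known facts about Kosaki's non-commutative $\LLL^p$-spaces, with the reference \cite{kosaki}, where these properties are established for general von Neumann algebras via complex interpolation. Your argument is therefore a genuinely different and more self-contained route: you reduce each property to standard Schatten-norm facts in finite dimension. For the ordering you use the three-factor H\"older inequality together with the normalization $\tr[\rho]=1$, and the exponent identity $2ar=1$ with $a=\tfrac{1}{2p}-\tfrac{1}{2q}$, $r=\tfrac{pq}{q-p}$ is exactly what makes the weight factors $\norm{\rho^{a}}_{2r}$ equal to $1$; for duality, the substitution $F=\rho^{1/2p}f\rho^{1/2p}$, $G=\rho^{1/2q}g\rho^{1/2q}$ collapses the weighted pairing to $\tr[FG]$ precisely because $\tfrac{1}{2p}+\tfrac{1}{2q}=\tfrac12$, and you correctly identify the one non-trivial verification, namely that restricting Schatten duality to Hermitian $G$ loses nothing, which you settle by exhibiting the Hermitian maximizer proportional to $\operatorname{sgn}(F)\,\abs{F}^{p-1}$; property (3) is essentially definitional under the convention $\rho^{1/2p}\to\identity_{AB}$. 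Both full-rankness of $\rho$ (needed for your bijections $\Gamma_{s}$ and the inverse powers $\rho^{-1/2p}$) and finite dimensionality are assumed by the paper, so nothing is lost there. What your approach buys is a short elementary proof using only tools the paper already cites (H\"older for Schatten norms, \cite{bhatia}); what the paper's citation buys is generality — infinite dimensions and arbitrary von Neumann algebras — at the cost of interpolation-theoretic machinery that is unnecessary for the finite-dimensional setting used here.
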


Another tool we will use in the proof of Step \ref{step:4} is the following result.

\begin{lemma}\label{lemma:contract}
Consider $\rho \in \SSS_{AB}$ and let $T$ be a quantum channel verifying $T^*(\rho)=\rho$, where $T^*$ denotes the dual of $T$ with respect to the Hilbert-Schmidt scalar product. Then, $T$ is contractive between $\LLL^1(\rho)$ and $\LLL^1(\rho)$, i.e., the following inequality holds for every $X \in \BB_{AB}$:
\begin{equation}
\norm{T(X)}_{\LLL^1(\rho)} \leq \norm{X}_{\LLL^1(\rho)}.
\end{equation}
\end{lemma}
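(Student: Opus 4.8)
The plan is to prove the statement by duality, reducing the $\LLL^1(\rho)$-contractivity of $T$ to the claim that the adjoint of $T$ with respect to the $\rho$-weighted inner product is a contraction for the operator norm. Concretely, by the duality between $\LLL^1(\rho)$ and $\LLL^\infty(\rho)$ (property 2 of Proposition \ref{prop:noncomLp}) together with the identification $\norm{\cdot}_{\LLL^\infty(\rho)}=\norm{\cdot}_\infty$ (property 3), I would write
\[
\norm{T(X)}_{\LLL^1(\rho)} = \sup \qty{ \langle g, T(X)\rangle_\rho \, : \, g \in \A_{AB}, \, \norm{g}_\infty \le 1 }.
\]
Thus it suffices to bound $\langle g, T(X)\rangle_\rho$ by $\norm{X}_{\LLL^1(\rho)}$ whenever $\norm{g}_\infty \le 1$.

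Next I would introduce the adjoint $T^\dagger_\rho$ of $T$ with respect to $\langle\cdot,\cdot\rangle_\rho$, i.e.\ the map characterised by $\langle g, T(X)\rangle_\rho = \langle T^\dagger_\rho(g), X\rangle_\rho$ for all $X$. Expanding the weighted inner product as $\langle g, T(X)\rangle_\rho = \tr[\sqrt{\rho}\, g \, \sqrt{\rho} \, T(X)]$ and moving $T$ onto the other factor via the Hilbert--Schmidt adjoint $T^*$ (so that $\tr[A\,T(X)]=\tr[T^*(A)\,X]$), a direct computation gives
\[
T^\dagger_\rho(g) = \rho^{-1/2}\, T^*\!\left( \rho^{1/2}\, g \, \rho^{1/2} \right) \rho^{-1/2}.
\]

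The crux is then to show that $T^\dagger_\rho$ is a contraction in the operator norm. I would argue that $T^\dagger_\rho$ is positive (indeed completely positive), being the composition of the conjugations $g \mapsto \rho^{1/2}g\,\rho^{1/2}$ and $\cdot \mapsto \rho^{-1/2}\,\cdot\,\rho^{-1/2}$ with the completely positive map $T^*$, and that it is unital: here the hypothesis $T^*(\rho)=\rho$ enters decisively, since
\[
T^\dagger_\rho(\identity_{AB}) = \rho^{-1/2}\, T^*(\rho)\, \rho^{-1/2} = \rho^{-1/2}\,\rho\,\rho^{-1/2} = \identity_{AB}.
\]
A positive unital map is a contraction for the operator norm (Russo--Dye/Kadison), so $\norm{T^\dagger_\rho(g)}_\infty \le \norm{g}_\infty \le 1$ for every admissible $g$.

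Finally, I would reassemble the pieces: using the adjoint relation together with the Hölder-type bound $\langle h, X\rangle_\rho \le \norm{h}_{\LLL^\infty(\rho)}\,\norm{X}_{\LLL^1(\rho)}$ implicit in property 2 of Proposition \ref{prop:noncomLp},
\[
\langle g, T(X)\rangle_\rho = \langle T^\dagger_\rho(g), X\rangle_\rho \le \norm{T^\dagger_\rho(g)}_\infty \, \norm{X}_{\LLL^1(\rho)} \le \norm{X}_{\LLL^1(\rho)},
\]
and taking the supremum over $g$ yields $\norm{T(X)}_{\LLL^1(\rho)} \le \norm{X}_{\LLL^1(\rho)}$. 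I expect the main obstacle to be the operator-norm contractivity of the positive unital map $T^\dagger_\rho$, together with the care needed to apply the duality of Proposition \ref{prop:noncomLp}, which is stated for Hermitian arguments, to a possibly non-Hermitian $X$; by contrast, the identification of $T^\dagger_\rho$ and the verification of its unitality are routine once the weighted inner product is unfolded.
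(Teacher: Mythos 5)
Your proof is correct and follows essentially the same route as the paper: both pass to the dual formulation via Proposition \ref{prop:noncomLp}, transfer $T$ to its Hilbert--Schmidt adjoint, and observe that the map $Y \mapsto \rho^{-1/2}\, T^*\!\left(\rho^{1/2} Y \rho^{1/2}\right) \rho^{-1/2}$ sends the operator-norm unit ball into itself thanks to $T^*(\rho)=\rho$. The only difference is cosmetic: where you invoke Kadison/Russo--Dye for the positive unital map $T^\dagger_\rho$, the paper proves exactly this (for the Hermitian $Y$ over which the supremum runs) by the elementary order argument $-\identity \le Y \le \identity \Rightarrow -\rho \le T^*(\rho^{1/2} Y \rho^{1/2}) \le \rho \Rightarrow -\identity \le Z \le \identity$.
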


\begin{proof}
Using the property of duality for the $\rho$-weighted norms of $\LLL^p$-spaces (property 2 of Proposition \ref{prop:noncomLp}), we can write:
\begin{align*}
\ds \norm{T(X)}_{\LLL^1(\rho)}  &=  \un{\norm{Y}_{\LLL^\infty(\rho)} \leq 1}{\text{sup}} \tr[T(X)\,\rho^{1/2} \, Y \, \rho^{1/2}] \\
&= \un{\norm{Y}_\infty \leq 1}{\text{sup}} \tr[T(X)\,\rho^{1/2} \, Y \, \rho^{1/2}] \\
&= \un{- \identity \leq Y \leq \identity }{\text{sup}} \tr[T(X)\, \rho^{1/2} \, Y \, \rho^{1/2}],
\end{align*}
where in the first step we have used the fact that, for every $\rho \in \SSS_{AB}$, $\norm{\cdot}_{\LLL^\infty(\rho)}$ coincides with the operator norm.

Recalling now that $T^*$ is the dual of $T$ with respect to the Hilbert-Schmidt scalar product, we have:
\begin{align*}
\ds \tr[T(X)\, \rho^{1/2} \, Y \, \rho^{1/2}] &= \tr[X\, T^*( \rho^{1/2} \, Y \, \rho^{1/2})] \\
&= \tr[X\, \rho^{1/2}  \, \rho^{-1/2} \,  T^*( \rho^{1/2} \, Y \, \rho^{1/2}) \,  \rho^{-1/2} \,  \rho^{1/2}  ].
\end{align*}

Since we are considering the supremum over the observables verifying $-\identity \leq Y \leq \identity$, if we apply to these inequalitites $T^*(\rho^{1/2} \cdot \rho^{1/2})$, we have $-\rho \leq T^*(\rho^{1/2} \, Y \, \rho^{1/2}) \leq \rho $ (because of the assumption $T^*(\rho)= \rho$).

Hence, if we denote $Z= \rho^{-1/2} \,  T^*( \rho^{1/2} \, Y \, \rho^{1/2}) \,  \rho^{-1/2}$, it is clear that whenever $ -\identity \leq Y \leq \identity$, also $-\identity \leq Z \leq \identity$. Therefore,
\begin{align*}
\norm{T(X)}_{\LLL^1(\rho)} &= \un{- \identity \leq Y \leq \identity }{\text{sup}} \tr[T(X)\, \rho^{1/2} \, Y \, \rho^{1/2}] \\
&= \un{- \identity \leq Y \leq \identity }{\text{sup}} \tr[X\, \rho^{1/2}  \, \rho^{-1/2} \,  T^*( \rho^{1/2} \, Y \, \rho^{1/2}) \,  \rho^{-1/2} \,  \rho^{1/2}  ] \\
&\leq \un{- \identity \leq Z \leq \identity }{\text{sup}} \tr[X\, \rho^{1/2}  \, Z \,  \rho^{1/2}  ] \\
&= \norm{X}_{\LLL^1(\rho)} ,
\end{align*}
where the last equality comes again from the property of duality of weighted $\LLL^p$-norms.

\end{proof}

In the proof of the previous lemma we have made strong use of the property of duality of $\LLL^p(\rho)$. Indeed, considering the $\LLL^1(\rho)$-norm as dual of the operator norm, has been essential to obtain the desired result.  Using similar tools, we can now prove the last step in the proof of Theorem \ref{thm:quasifactorizationAB}.

\begin{step}\label{step:4}
With the notation of the previous steps, we have
\begin{equation}
\norm{L(\sigma_{AB})}_\infty \leq \norm{ \sigma_A^{-1/2} \otimes \sigma_B^{-1/2} \, \sigma_{AB} \, \sigma_A^{-1/2} \otimes \sigma_B^{-1/2} - \identity_{AB}}_\infty.
\end{equation}
\end{step}

\begin{proof}
The strategy we follow in this proof is the opposite to the one used in the previous lemma, i.e., we study now the $\LLL^\infty(\sigma_A \otimes \sigma_B)$-norm as the dual of  the $\LLL^1(\sigma_A \otimes \sigma_B)$-norm. Since $\norm{\cdot}_{\LLL^\infty (\rho_{AB})}$ coincides with the usual $\infty$-norm (operator norm) for every $\rho_{AB} \in \SSS_{AB}$, we can write
\begin{center}
$\ds  \norm{L(\sigma_{AB})}_\infty = \norm{\mathcal{T}_{\sigma_{A} \otimes \sigma_{B}} \left(\sigma_{AB} \right)  - \identity_{AB}}_{\LLL^\infty(\sigma_A \otimes \sigma_B)}$.
\end{center}

Using the aforementioned property of duality for the $\sigma_A \otimes \sigma_B$-weighted norms of $\LLL^p$-spaces, we have:
\begin{align*}
& \norm{\mathcal{T}_{\sigma_{A} \otimes \sigma_{B}} \left(\sigma_{AB} \right)  - \identity_{AB}}_{\LLL^\infty(\sigma_A \otimes \sigma_B)} = \\
& \phantom{asdasddad}=\un{\norm{O_{AB}}_{\LLL^1(\sigma_A \otimes \sigma_B)} \leq 1}{\text{sup}} \left\langle O_{AB} , \mathcal{T}_{\sigma_{A} \otimes \sigma_{B}} \left(\sigma_{AB} \right)  - \identity_{AB} \right\rangle_{\sigma_A \otimes \sigma_B} \\
&\phantom{asdasddad} =\un{\norm{O_{AB}}_{\LLL^1(\sigma_A \otimes \sigma_B)} \leq 1}{\text{sup}} \tr[ (\sigma_A \otimes \sigma_B)^{1/2} \, O_{AB}\, (\sigma_A \otimes \sigma_B)^{1/2} \left(  \mathcal{T}_{\sigma_{A} \otimes \sigma_{B}} \left(\sigma_{AB} \right)  - \identity_{AB} \right)] \\
&\phantom{asdasddad} =\un{\norm{O_{AB}}_{\LLL^1(\sigma_A \otimes \sigma_B)} \leq 1}{\text{sup}} \left(   \underbrace{\tr[ \sigma_A^{1/2} \otimes \sigma_B^{1/2} \, O_{AB}\, \sigma_A^{1/2} \otimes \sigma_B^{1/2}  \mathcal{T}_{\sigma_{A} \otimes \sigma_{B}} \left(\sigma_{AB} \right) ]}_{R} \right. \\
& \left.  \phantom{asdasddadaszxczzxdasdasdads} - \underbrace{\tr[\sigma_A^{1/2} \otimes \sigma_B^{1/2} \, O_{AB}\, \sigma_A^{1/2} \otimes \sigma_B^{1/2}  ] }_{S}  \right).
\end{align*}

Let us analyze the terms $R$ and $S$ separately. For $R$, we have:
\begin{align*}
R&= \tr[ \sigma_A^{1/2} \otimes \sigma_B^{1/2} \, O_{AB}\, \sigma_A^{1/2} \otimes \sigma_B^{1/2}  \mathcal{T}_{\sigma_{A} \otimes \sigma_{B}} \left(\sigma_{AB} \right) ] \\
& = \tr[ (\sigma_A \otimes \sigma_B)^{1/2} \, O_{AB}\, (\sigma_A \otimes \sigma_B)^{1/2}  \int_{- \infty}^\infty dt \, \beta_0 (t) \left( \sigma_A \otimes \sigma_B \right)^{\frac{-1-it}{2}} \, \sigma_{AB} \,  \left( \sigma_A \otimes \sigma_B \right)^{\frac{-1+it}{2}} ] \\
& = \tr[  O_{AB}\, \int_{- \infty}^\infty dt \, \beta_0 (t) \left( \sigma_A \otimes \sigma_B \right)^{\frac{-it}{2}} \, \sigma_{AB} \,  \left( \sigma_A \otimes \sigma_B \right)^{\frac{it}{2}} ] \\
&  =\int_{- \infty}^\infty dt \, \beta_0 (t) \tr[O_{AB} \, \left( \sigma_A \otimes \sigma_B \right)^{\frac{-it}{2}} \, \sigma_{AB} \,  \left( \sigma_A \otimes \sigma_B \right)^{\frac{it}{2}}] \\
&  =\int_{- \infty}^\infty dt \, \beta_0 (t) \tr[\left( \sigma_A \otimes \sigma_B \right)^{\frac{it}{2}} \; O_{AB} \, \left( \sigma_A \otimes \sigma_B \right)^{\frac{-it}{2}} \, \sigma_{AB} ] \\
&  = \tr[ \sigma_{AB} \underbrace{ \int_{- \infty}^\infty dt \, \beta_0 (t) \left( \sigma_A \otimes \sigma_B \right)^{\frac{it}{2}} \, O_{AB} \,  \left( \sigma_A \otimes \sigma_B \right)^{\frac{-it}{2}}}_{\widetilde{O}_{AB}} ],
\end{align*}
where in the third and last equality we have used the fact that the integral and the trace commute, and the fourth equality is due to the cyclicity of the trace. We have also defined:
\begin{center}
$\ds  \widetilde{O}_{AB}:=  \int_{- \infty}^\infty dt \, \beta_0 (t) \left( \sigma_A \otimes \sigma_B \right)^{\frac{it}{2}} \, O_{AB} \,  \left( \sigma_A \otimes \sigma_B \right)^{\frac{-it}{2}}$.
\end{center}

If we were able to express  $S$ in terms of $\widetilde{O}_{AB}$, we could simplify the expression that appears in the supremum above. We can do that in the following way:

\begin{align*}
S& = \tr[\sigma_A^{1/2} \otimes \sigma_B^{1/2} \, O_{AB}\, \sigma_A^{1/2} \otimes \sigma_B^{1/2}  ] \\
&= \tr[\sigma_A^{1/2} \otimes \sigma_B^{1/2} \, O_{AB}\, \sigma_A^{1/2} \otimes \sigma_B^{1/2}  \int_{- \infty}^\infty dt \, \beta_0 (t) ] \\
&=  \int_{- \infty}^\infty dt \, \beta_0 (t) \tr[\sigma_A^{1/2} \otimes \sigma_B^{1/2} \, O_{AB}\, \sigma_A^{1/2} \otimes \sigma_B^{1/2}  ] \\
&=  \int_{- \infty}^\infty dt \, \beta_0 (t) \tr[(\sigma_A\otimes \sigma_B) \, (\sigma_A\otimes \sigma_B)^{\frac{it}{2}} \, O_{AB}\, (\sigma_A\otimes \sigma_B)^{\frac{-it}{2}}   ] \\
&= \tr[(\sigma_A\otimes \sigma_B) \int_{- \infty}^\infty dt \, \beta_0 (t)  (\sigma_A\otimes \sigma_B)^{\frac{it}{2}} \, O_{AB}\, (\sigma_A\otimes \sigma_B)^{\frac{-it}{2}}   ] \\
&= \tr[(\sigma_A\otimes \sigma_B) \, \widetilde{O}_{AB} ],
\end{align*}
where we have used again the properties of cyclicity of the trace and commutativity of the integral and the trace.

Placing now the values for $R$ and $S$ that we have just computed in the supremum of the first part of the proof, we have:
\begin{align*}
\norm{\mathcal{T}_{\sigma_{A} \otimes \sigma_{B}} \left(\sigma_{AB} \right)  - \identity_{AB}}_{\LLL^\infty(\sigma_A \otimes \sigma_B)} &= \un{\norm{O_{AB}}_{L^1(\sigma_A \otimes \sigma_B)} \leq 1}{\text{sup}} \left(  \tr[\sigma_{AB}  \, \widetilde{O}_{AB} ] - \tr[\sigma_A \otimes \sigma_B \, \widetilde{O}_{AB} ] \right) \\
&=  \un{\norm{O_{AB}}_{L^1(\sigma_A \otimes \sigma_B)} \leq 1}{\text{sup}}   \tr[\, \widetilde{O}_{AB} \left(  \sigma_{AB} - \sigma_A \otimes \sigma_B \right) ] .
\end{align*}

This expression looks much simpler than the one we had before. However, we need to prove that $\norm{\widetilde{O}_{AB}}_{L^1(\sigma_A \otimes \sigma_B)} \leq 1$ in order to see $\widetilde{O}_{AB}$ as one of the terms where the supremum is taken. Indeed, if we consider the map $T: \A_{AB} \rightarrow \A_{AB}$ given by 
\begin{center}
$\ds  O_{AB} \mapsto \int_{- \infty}^\infty dt \, \beta_0 (t)  (\sigma_A\otimes \sigma_B)^{\frac{it}{2}} \, O_{AB}\, (\sigma_A\otimes \sigma_B)^{\frac{-it}{2}}  $,
\end{center}
it is clearly a quantum channel and also verifies $T^*(\sigma_A \otimes \sigma_B)= \sigma_A \otimes \sigma_B$. Hence, in virtue of Lemma \ref{lemma:contract}, we have
\begin{center}
$\ds  \norm{\widetilde{O}_{AB}}_{\LLL^1(\sigma_A \otimes \sigma_B)} \leq \norm{O_{AB}}_{\LLL^1(\sigma_A \otimes \sigma_B)}  $,
\end{center}
and, therefore,
\begin{center}
$\ds  \un{\norm{O_{AB}}_{L^1(\sigma_A \otimes \sigma_B)} \leq 1}{\text{sup}}   \tr[\widetilde{O}_{AB} \left(  \sigma_{AB} - \sigma_A \otimes \sigma_B \right) ] \leq  \un{\norm{\Omega_{AB}}_{L^1(\sigma_A \otimes \sigma_B)} \leq 1}{\text{sup}}   \tr[\Omega_{AB} \left(  \sigma_{AB} - \sigma_A \otimes \sigma_B \right) ]. $
\end{center}

In this last supremum over elements of $1$-norm, we can undo the previous transformations in order to obtain again an $\infty$-norm. First, we need to write the term in the supremum as a $\sigma_A \otimes \sigma_B$-product of two terms:
\begin{align*}
& \tr[\Omega_{AB} \left(  \sigma_{AB} - \sigma_A \otimes \sigma_B \right) ] = \\
&  \phantom{asdasdd}  = \tr[  (\sigma_A \otimes \sigma_B)^{1/2} \, (\sigma_A \otimes \sigma_B)^{-1/2} \, \sigma_{AB}  \, (\sigma_A \otimes \sigma_B)^{-1/2} \, (\sigma_A \otimes \sigma_B)^{1/2} \, \Omega_{AB} ] \\
& \phantom{asdasddas} - \tr[  (\sigma_A \otimes \sigma_B)^{1/2} \,  \Omega_{AB} \, (\sigma_A \otimes \sigma_B)^{1/2} ] \\
& \phantom{asdasdd} =   \left\langle \Omega_{AB} , (\sigma_A \otimes \sigma_B)^{-1/2} \, \sigma_{AB}  \, (\sigma_A \otimes \sigma_B)^{-1/2} \right\rangle_{\sigma_A \otimes \sigma_B}  -\left\langle \Omega_{AB} , \identity_{AB}  \right\rangle_{\sigma_A \otimes \sigma_B}\\
& \phantom{asdasdd} =  \left\langle \Omega_{AB} ,  (\sigma_A \otimes \sigma_B)^{-1/2} \, \sigma_{AB}  \, (\sigma_A \otimes \sigma_B)^{-1/2} -\identity_{AB} \right\rangle_{\sigma_A \otimes \sigma_B}. 
\end{align*}

Finally, using again the property of duality for the norms of  $\LLL^1(\sigma_A \otimes \sigma_B)$ and $ \LLL^\infty(\sigma_A \otimes \sigma_B)$,  we have:
\begin{align*}
&  \un{\norm{\Omega_{AB}}_{L^1(\sigma_A \otimes \sigma_B)} \leq 1}{\text{sup}}   \tr[\Omega_{AB} \left(  \sigma_{AB} - \sigma_A \otimes \sigma_B \right) ] \\
& \phantom{asdasdad} =  \un{\norm{\Omega_{AB}}_{L^1(\sigma_A \otimes \sigma_B)} \leq 1}{\text{sup}}   \left\langle \Omega_{AB} , (\sigma_A \otimes \sigma_B)^{-1/2} \, \sigma_{AB}  \, (\sigma_A \otimes \sigma_B)^{-1/2} - \identity_{AB} \right\rangle_{\sigma_A \otimes \sigma_B} \\
& \phantom{asdasdad} = \norm{\sigma_A^{-1/2} \otimes \sigma_B^{-1/2} \, \sigma_{AB} \, \sigma_A^{-1/2} \otimes \sigma_B^{-1/2} - \identity_{AB}}_{\LLL^{\infty}(\sigma_A \otimes \sigma_B)}  \\
& \phantom{asdasdad}= \norm{\sigma_A^{-1/2} \otimes \sigma_B^{-1/2} \, \sigma_{AB} \, \sigma_A^{-1/2} \otimes \sigma_B^{-1/2} - \identity_{AB}}_{\infty},
\end{align*}
where we have used again the fact  that $\norm{\cdot}_{\LLL^\infty (\rho_{AB})}$ coincides with the usual $\infty$-norm  for every $\rho_{AB} \in \SSS_{AB}$.

In conclusion,
\begin{center}
$\ds \norm{\mathcal{T}_{\sigma_{A} \otimes \sigma_{B}} \left(\sigma_{AB} \right)  - \identity_{AB}}_{\infty} \leq \norm{\sigma_A^{-1/2} \otimes \sigma_B^{-1/2} \, \sigma_{AB} \, \sigma_A^{-1/2} \otimes \sigma_B^{-1/2} - \identity_{AB}}_{\infty}. $
\end{center}

\end{proof}

By putting together Step \ref{step:1}, Step \ref{step:2}, Step \ref{step:3} and Step \ref{step:4}, we
conclude the proof of Theorem \ref{thm:quasifactorizationAB}.

\begin{remark}
This result constitutes an extension of the superadditivity property, i.e., the constant $H(\sigma_{AB})$ that appears in the statement of the main theorem is $0$ when $\sigma_{AB}=\sigma_A \otimes \sigma_B$ and is small whenever $\sigma_{AB} \sim \sigma_A \otimes \sigma_B$. A trivial upper bound can be found with respect to the trace distance as follows,
\begin{align*}
& \norm{\sigma_A^{-1/2} \otimes \sigma_B^{-1/2} \, \sigma_{AB} \, \sigma_A^{-1/2} \otimes \sigma_B^{-1/2} - \identity_{AB}}_{\infty} = \\
& \phantom{asdasdaad} = \norm{\sigma_A^{-1/2} \otimes \sigma_B^{-1/2} ( \sigma_{AB} - \sigma_A \otimes \sigma_B) \sigma_A^{-1/2} \otimes \sigma_B^{-1/2}}_\infty \\
&  \phantom{asdasdaad}  \leq \norm{\sigma_A^{-1/2} \otimes \sigma_B^{-1/2} ( \sigma_{AB} - \sigma_A \otimes \sigma_B) \sigma_A^{-1/2} \otimes \sigma_B^{-1/2}}_1 \\
& \phantom{asdasdaad}  \leq \norm{\sigma_A^{-1/2} \otimes \sigma_B^{-1/2} }_\infty \norm{\sigma_{AB} - \sigma_A \otimes \sigma_B}_1 \norm{\sigma_A^{-1/2} \otimes \sigma_B^{-1/2} }_\infty \\
& \phantom{asdasdaad}  \leq \sigma_{\text{min}}^{-2} \, \norm{\sigma_{AB} - \sigma_A \otimes \sigma_B}_1 .
\end{align*}

\end{remark}

\begin{remark}
The term  $\norm{H(\sigma_{AB})}_\infty$ is also closely related to certain forms of \textit{decay of correlations} of states that have already appeared in quantum many body systems, such as \textit{LTQO (Local Topological Quantum Order)} \cite{spiros}, or the concept of \textit{local indistinguishability} as a strengthened form of \textit{weak clustering} in \cite{kast-brand}. 

Let us suppose that $\norm{H(\sigma_{AB})}_\infty \leq \lambda(\ell)$ for a certain small scalar $\lambda(\ell)$ that decays sufficiently fast as a function of the distance $\ell$ between regions $A$ and $B$ in a many body system, and denote by $\left\langle f \right\rangle_\varphi$ the expected value of an observable $f \in \A_{AB} $ with respect to a state $\varphi$ (usually the ground or thermal state of the system). Then, for every observable of the form $O_A \otimes O_B \geq 0$, if the reduced density matrix on $AB$ of $\varphi$ is $\sigma_{AB}$, the previous condition can be rewritten as
\begin{center}
$\ds  \abs{\left\langle O_A O_B \right\rangle_\varphi - \left\langle O_A \right\rangle_\varphi \left\langle O_B \right\rangle_\varphi } \leq \lambda \left\langle O_A  \right\rangle_\varphi \left\langle O_B \right\rangle_\varphi $.
\end{center}

One can now compare this expression with the definition of decay of correlations 
\begin{center}
$\ds \abs{\left\langle O_A O_B \right\rangle_\varphi - \left\langle O_A \right\rangle_\varphi \left\langle O_B \right\rangle_\varphi } \leq \lambda(\ell) \norm{O_A}_\infty \norm{O_B}_\infty$,
\end{center}
or LTQO  
\begin{center}
$\ds \abs{\left\langle O_A O_B \right\rangle_\varphi - \left\langle O_A \right\rangle_\varphi \left\langle O_B \right\rangle_\varphi } \leq \lambda(\ell) \left\langle O_A \right\rangle_\varphi \norm{O_B}_\infty$.
\end{center}

\end{remark}

\section{Conclusion}
In this work, we have proven an extension of the property of superadditivity of the quantum relative entropy for general states. Our result constitutes an improvement to the usual lower bound for the relative entropy of two bipartite states, given by the property of monotonicity, in terms of the relative entropies in the two constituent spaces, whenever the second state is near to be a tensor product. Therefore, it might be relevant for situations where this property is expected to hold, such as quantum many body systems, in which  it is likely that the Gibbs state satisfies this property in spatially separated systems. 

In \cite{kast-brand}, Kastoryano and Brandao proved, for certain Gibbs samplers, the existence of a positive spectral gap for the dissipative dynamics, via a quasi-factorization result of the variance. This provides a bound for the mixing time of the evolution of the semigroup that drives the system to thermalization which is polynomial in the system size. We leave for future work the possibility of using the result of the present paper to obtain a quasi-factorization of the relative entropy in quantum many body systems, which could allow us to prove, under some conditions of decay of correlations on the Gibbs state, the existence of a positive log-Sobolev constant,  obtaining an exponential improvement in the bound for the mixing time obtained in \cite{kast-brand}.

\section*{Acknowledgment}

We are very grateful to D. Sutter and M. Tomamichel, who detected an error in a previous version of the paper. We also thank M. Junge for fruitful discussions. AC and DPG acknowledge support from MINECO (grant MTM2014-54240-P), from Comunidad de Madrid (grant QUITEMAD+- CM, ref. S2013/ICE-2801), and the European Research Council (ERC) under the European Union’s Horizon 2020 research and innovation programme (grant agreement No 648913). AC is partially supported by a La Caixa-Severo Ochoa grant (ICMAT Severo Ochoa project SEV-2011-0087, MINECO). AL acknowledges financial support from the European Research Council (ERC Grant Agreement no 337603), the Danish Council for Independent Research (Sapere Aude) and VILLUM FONDEN via the QMATH Centre of Excellence (Grant No. 10059). This work has been partially supported by ICMAT Severo Ochoa project SEV-2015-0554 (MINECO).

\appendix



\begin{thebibliography}{30}





\bibitem{bhatia} \textsc{R.~Bhatia}, Matrix Analysis, \emph{Springer Science $\&$ Business Media} \textbf{169} (1997), doi:10.1007/978-1-4612-0653-8.

\bibitem{GTinfdim2} \textsc{M.~Breitenbecker and H.R.~Gruemm}, Note on trace inequalities, \emph{Commun. Math. Phys.} \textbf{26} (1972), 276-279,  doi:10.1007/BF01645522.


\bibitem{cesi} \textsc{F.~Cesi}, Quasi-factorization of the entropy and logarithmic Sobolev inequalities for Gibbs random fields, \emph{Probab. Theory Related Fields} \textbf{120} (2001), 569-584, doi:10.1007/PL00008792.

\bibitem{csiszar} \textsc{I.~Csiszár}, Information-type measures of difference of probability distributions and indirect observations, \emph{Stud. Sci. Math. Hung.} \textbf{2}  (1967), 299-318.

\bibitem{clasico} \textsc{P.~Dai Pra, A.M.~Paganoni and G.~Posta}, Entropy inequalities for unbounded spin systems, \emph{Ann. Probab.} \textbf{30} (2002), 1959-1976, doi:10.1214/aop/1039548378.

\bibitem{multioperator1} \textsc{F.~Dupuis and M.M.~Wilde}, Swiveled Rényi entropies, \emph{Quantum Inf. Process.} \textbf{15}(3) (2016), 1309-1345, doi:10.1007/s11128-015-1211-x.

\bibitem{gallego} \textsc{R.~Gallego, J.~Eisert and H.~Wilming}, Thermodynamic work from operational principles, \emph{New J. Phys.} \textbf{18} (2016), 103017, doi:10.1088/1367-2630/18/10/103017.

\bibitem{golden} \textsc{S.~Golden}, Lower Bounds for the Helmholtz Function, \emph{Phys. Rev., Series II,} \textbf{137} (1965), B1127-B1128, doi:10.1103/PhysRev.137.B1127.

\bibitem{petz} \textsc{F.~Hiai and D.~Petz}, The proper formula for relative entropy and its asymptotics in quantum probability, \emph{Commun. Math. Phys.} \textbf{143}(1) (1991), 99-114, doi:10.1007/BF02100287.


\bibitem{kast-brand} \textsc{M.J.~Kastoryano and F.G.S.L.~Brandão}, Quantum Gibbs Samplers: The Commuting Case, \emph{Commun. Math. Phys.} \textbf{344} (2016), 915-957, doi:10.1007/s00220-016-2641-8.

\bibitem{kosaki} \textsc{H.~Kosaki}, Application of the complex interpolation method to a von Neumann algebra: non-commutative $L^p$-spaces, \emph{J. Funct. Anal.} \textbf{56} (1984), 29-78, doi:10.1016/0022-1236(84)90025-9.

\bibitem{kld} \textsc{S.~Kullback and R.A.~Leibler}, On information and sufficiency, \emph{Annals of Math. Stat.} \textbf{22}(1)  (1951), 79-86, doi:10.1214/aoms/1177729694.

\bibitem{lieb} \textsc{E.H.~Lieb}, Convex trace functions and the Wigner–Yanase–Dyson conjecture, \emph{Adv. Math.} \textbf{11}(3) (1973), 267-288,  doi:10.1016/0001-8708(73)90011-X.

\bibitem{marti-oliv} \textsc{F.~Martinelli and E.~Olivieri}, Approach to equilibrium of Glauber dynamics in the one phase region II. The general case, \emph{Commun. Math. Phys.} \textbf{161} (1994), 487-514, doi:10.1007/BF02101930.

\bibitem{matsumoto} \textsc{K.~Matsumoto}, Reverse Test and Characterization of Quantum Relative Entropy, preprint (2010), arxiv:1010.1030

\bibitem{spiros} \textsc{S.~Michalakis and J.~Pytel}, Stability of frustration free Hamiltonians, \emph{Commun. Math. Phys.} \textbf{332}(2) (2013), 277-302, doi:10.1007/s00220-013-1762-6.

\bibitem{libropetz} \textsc{M.~Ohya and D.~Petz}, Quantum Entropy and Its Use, \emph{Texts and Monographs in Physics (Springer-Verlag, Berlin)} (1993).

\bibitem{pinsker} \textsc{M.S.~Pinsker}, Information and Information Stability of Random Variables and Processes, \emph{Holden Day} (1964).

\bibitem{GTinfdim} \textsc{M.B.~Ruskai}, Inequalities for traces on Von Neumann algebras, \emph{Commun. Math. Phys.} \textbf{26} (1972), 280-289,  doi:10.1007/BF01645523.

\bibitem{shannon} \textsc{C.E.~Shannon}, A mathematical theory of communication, \emph{Bell Syst. Tech. J.} \textbf{27}  (1948), 379-423, 623-656, doi:10.1002/j.1538-7305.1948.tb01338.x, 10.1002/j.1538-7305.1948.tb00917.x.

\bibitem{sutter} \textsc{D.~Sutter, M.~Berta and M.~Tomamichel}, Multivariate Trace Inequalities, \emph{Commun. Math. Phys.}  \textbf{352}(1) (2017), 37-58, doi:10.1007/s00220-016-2778-5.

\bibitem{thompson} \textsc{C.J.~Thompson}, Inequality with Applications in Statistical Mechanics, \emph{J. Math. Phys.} \textbf{6} (1965), 1812-1813, doi:10.1063/1.1704727.

\bibitem{umegaki} \textsc{H.~Umegaki}, Conditional expectation in an operator algebra IV. Entropy and information, \emph{Kodai Math. Sem. Rep.} \textbf{14}  (1962), 59-85, doi:10.2996/kmj/1138844604.

\bibitem{wehrl} \textsc{A.~Wehrl}, General properties of entropy, \emph{Rev. Mod. Phys.} \textbf{50}(2) (1978), 221-260, doi:10.1103/RevModPhys.50.221.

\bibitem{multioperator2} \textsc{M.M.~Wilde}, Monotonicity of $p$-norms of multiple operators via unitary swivels, preprint (2016),  arxiv:1610.01262.

\bibitem{wilde} \textsc{M.M.~Wilde}, Quantum Information Theory, Second Edition, (2017), \emph{Cambridge University Press}, doi:10.1017/9781316809976.

\bibitem{axcharRE} \textsc{H.~Wilming, R.~Gallego and J.~Eisert}, Axiomatic Characterization of the Quantum Relative Entropy and Free Energy, \emph{Entropy} \textbf{19}(6) (2017), 241, doi:10.3390/e19060241.

\bibitem{wolf} \textsc{M.M.~Wolf}, Quantum Channels and Operations. Guided tour, (2012), https://www-m5.ma.tum.de/foswiki/pub/M5/Allgemeines/MichaelWolf/QChannelLecture.pdf.



\end{thebibliography}
\end{document}